\theoremstyle{plain}
\newtheorem{theorem}{Theorem}[section]
\newtheorem{proposition}[theorem]{Proposition}
\newtheorem{lemma}[theorem]{Lemma}
\theoremstyle{definition}
\newtheorem{definition}[theorem]{Definition}
\newtheorem{remark}[theorem]{Remark}
\newtheorem{example}[theorem]{Example}
\newtheorem{assumption}[theorem]{Assumption}
\theoremstyle{remark}
\renewenvironment{thebibliography}[1]{%
\begin{oldthebibliography}{#1}%
\setlength{\baselineskip}{.9em}
\linespread{1}
\small
\setlength{\parskip}{0.3ex}%
\setlength{\itemsep}{.5em}%
}%
{%
\end{oldthebibliography}%
}
\newcommand{\eps}{\varepsilon}
\newcommand{\E}{\mathbb{E}}
\newcommand{\F}{\mathbb{F}}
\newcommand{\N}{\mathbb{N}}
\newcommand{\Q}{\mathbb{Q}}
\newcommand{\R}{\mathbb{R}}
\newcommand{\cB}{\mathcal{B}}
\newcommand{\cF}{\mathcal{F}}
\newcommand{\cH}{\mathcal{H}}
\newcommand{\cK}{\mathcal{K}}
\newcommand{\cS}{\mathcal{S}}
\newcommand{\fP}{\mathfrak{P}}
\newcommand{\fS}{\mathfrak{S}}
\newcommand{\fM}{\mathfrak{M}}
\newcommand{\cl}{\mathrm{cl}\,}
\DeclareMathOperator{\dom}{dom}
\DeclareMathOperator{\proj}{proj}
\DeclareMathOperator{\NA}{NA}
\newcommand{\sint}{\stackrel{\mbox{\tiny$\bullet$}}{}}
\def\hypo{\mathrm{hypo}\,}
\numberwithin{equation}{section}
\begin{document}

\title{\vspace{-6em}
Robust Utility Maximization in Discrete-Time Markets with Friction
\date{\today}
\author{
  Ariel Neufeld%
  \thanks{
  Department of Mathematics, ETH Zurich, \texttt{ariel.neufeld@math.ethz.ch}.
  Financial support by the Swiss National Foundation grant SNF 200021$\_$153555 is gratefully acknowledged.
  }
  \and
  Mario \v{S}iki\'c
  \thanks{
    Center for Finance and Insurance, University of Zurich, \texttt{mario.sikic@bf.uzh.ch}.
  }
 }
}
\maketitle \vspace{-1.2em}

\begin{abstract}
We study a robust stochastic optimization problem in the quasi-sure setting in discrete-time. We show that under a lineality-type condition the problem admits a maximizer. This condition is implied by the no-arbitrage condition in models of financial markets. As a corollary, we obtain existence of an utility maximizer in the frictionless market model, markets with proportional transaction costs and also more general convex costs, like in the case of market impact.
\end{abstract}

\vspace{.9em}

{\small
\noindent \emph{Keywords} Robust Optimization; Utility Maximization; Financial Markets with Friction 

\noindent \emph{AMS 2010 Subject Classification}
93E20; 49L20; 91B28
}

\section{Introduction}\label{sec:intro}
An agent participates in the market by buying and selling options. If we denote the portfolio by $H$, i.e. the holdings of the agent at time $t$ by $H_t$, then after $T$ steps and liquidation of the portfolio, he or she will have amount $V(H)$ on the bank account. Instead of moving arbitrarily in the market, we ask ourselves whether the optimal portfolio process, or strategy, of the trader exists. When talking about optimality, we need to specify a preference relation on the set of final wealths or states of the bank account. The preference we will be considering is the robust utility preference
\begin{equation}
H= (H_0,\ldots,H_{T-1})\mapsto\inf_{P\in\fP} E^P\big[U(V(H_0,\ldots,H_{T-1}))\big],
\end{equation}
where the utility function is given by $U$ and $\fP$ denotes a collection of probability measures. 

The use of robust utility preference is motivated by the fact that the true probability measure might not be known. So, instead of considering the utility maximization with respect to one measure, that we guess to be the correct one, we consider a family of probability measures that are possible. We then talk about model uncertainty: it is not known which of the probability measures is the true one, but hopefully, the class of probability measures we are considering is big enough to contain the true one. The optimal strategy for the robust utility maximization problem is giving `the best possible performance' under `the worst probability measure'.

Instead of specifying the market model and analyzing the robust utility maximization problem, it will be easier to consider a more general model. The advantage of doing this, beside simpler notation, is the fact that not every utility optimization problem is of the type: maximizing the robust utility of terminal wealth. One can, for instance, consider the problem of optimal consumption stream. We will, thus, consider the following robust optimization problem
\begin{equation}\label{eq:robust-optim-Intro}
\sup_{H\in\cH}\,\inf_{P\in\fP} E^P\big[\Psi(H_0,\ldots,H_{T-1})\big].
\end{equation}
The goal of this paper is to show the existence of a maximizer $\widehat H$ for a general class of concave functions $\Psi$. In Section~\ref{sec:examples}, we will provide examples showing that this more general problem indeed contains the robust utility maximization problem. The examples include the liquidation value of the strategy $H$ at maturity $T$ in frictionless market models, but also with proportional transaction costs and even more general costs.

Concerning the set $\fP$, there are two cases that prompt different approaches. If the set $\fP$ is dominated, i.e. there exists a probability measure $P$ such that every measure $Q\in\fP$ is absolutely continuous with respect to $P$, there is a plethora of approaches to obtain existence of optimizers. A simple one, that works under our set of assumptions: start with the maximizing sequence of strategies and pass, using Komlos type lemma (see e.g. \cite[Lemma~9.8.1, p.202]{DelbaenSchachermayer.06}), to a convergent sequence of convex combinations. The limit of this sequence, thus, needs to be the maximizer. One can also apply dynamic programming, like in~\cite{RasonyiStettner.06}, to obtain existence under relaxed set of assumptions, or a duality argument, see~\cite{Owari.15}.

If the set $\fP$ of probability measures on $\Omega$ is not dominated, the set of approaches to establish existence becomes very limited. Many approaches of the dominated case do not transfer over to this one; for instance, there is no analogue of a Komlos-type lemma. So, the simple argument for existence provided above does not work anymore. Also, the duality approach is, to the best of our knowledge, not applied here. What remains is the dynamic programming approach. For that reason, the setup we are working is in needs to be carefully laid.

Robust utility maximization was already considered in the literature. The closest to our work in discrete-time is~\cite{Nutz.13util}. There, it has been shown existence of an optimal strategy in the frictionless market model where the utility function is defined on the positive half-line. This work is essentially the robust analogue of~\cite{RasonyiStettner.06}. For more results concerning the robust utility maximization problem in a nondominated framework, we refer to \cite{Bartl.16,BiaginiPinar.15,CarassusBlanchard.16,DenisKervarec.13,FouquePunWong.16,LinRiedel.14,MatoussiPossamaiZhou.12utility,NeufeldNutz.15,TevzadzeToronjadzeUzunashvili.13}. To the best of our knowledge, robust utility maximization in the nondominated setting for financial markets with friction has not been studied yet.

The main tool for proving existence is dynamic programming. Dynamic programming is an approach that replaces the multi-step decision problem with a series of one-step decision problems. If one can solve, i.e. prove existence of optimizers of the one-step problems, then one gets existence in general, by using those one-step optimizers in sequence. In order to be able to apply the approach, we need to first work in the setup that is suitable for measurable selection. The formulation of a robust market from~\cite{BouchardNutz.13} turns out to be appropriate for our needs. We will follow the formulation of dynamic programming in~\cite{Evstigneev.76}.

How does one establish the existence of the optimal strategy in the one-step models that we need to solve? One approach, taken in~\cite{Nutz.13util} and~\cite{RasonyiStettner.06}, is to set up the problem in such a way that the set of strategies in the one-step problems one obtains is compact. Indeed, under a suitable no-arbitrage condition, having a frictionless market model with utility function defined on the positive half-line implies this compactness, up to the projection on the predictable range.

To consider utility functions that are defined on the whole real line, for example the exponential utility function, one needs to resort to convex analysis. Theorem~9.2 in~\cite{Rockafellar.97} provides the necessary condition. When considering utility maximization in a one-step market model, this condition turns out to be just the no-arbitrage condition; we will prove this statement in the course of the paper. In our general setting, the condition that will serve as a no-arbitrage condition will be the following
\begin{equation}\label{eq:def-NA-Intro}
\mbox{the set}
\quad
\cK:=\big\{H \in \cH \, \big| \, \Psi^\infty(H_0,\dots,H_{T-1})\geq 0 \ \fP\mbox{-q.s.}\big\}
\quad
\mbox{ is linear,}
\end{equation}
where here $\Psi^\infty$ denotes  the horizon function of the concave function $\Psi$; see 
Section~\ref{sec:Multi-Per}. This condition coincides with the robust no-arbitrage condition in the frictionless market model. The main step, indeed the main technical obstacle in the proof of our main result, is to show that the `global' condition~\eqref{eq:def-NA-Intro} satisfy a `local' version at each time step.

The remainder of this paper is organized as follows. In Section 2, we introduce the concepts, list the assumptions imposed on $\Psi$ and state the main results. The examples of robust utility maximization in different financial markets with friction are given in Section~3. In Section~4, we introduce and solve the corresponding one-period maximization problem. The problem of finding a maximizer reduces to a question of closedness property of the hypograph of the function over which one maximizes. The result \cite[Theorem~9.2, p.75]{Rockafellar.97} provides an answer to the above question and explains the sufficiency of the condition \eqref{eq:def-NA-Intro}. In Section~5, we introduce the notion needed in our dynamic programming approach and explain why this leads to the existence of a maximizer in our optimization problem \eqref{eq:robust-optim-Intro}. The proof is then divided into several steps, which heavily uses the theory of lower semianalytic functions.

\paragraph{\textbf{Notation}}

For any vector $x\in\R^{dT}$, written out as $x=(x_0,\ldots,x_{T-1})$, where $x_i\in\R^d$ for each $i$, we denote the restriction to the first $t$ entries by $x^t:=(x_0,...,x_{t-1})$. For $y\in \R^{dT}$, we denote by $x\cdot y$ the usual scalar product on $\R^{dT}$.

\section{Optimization Problem}\label{sec:Multi-Per}
Let $T\in \N$ denote the fixed finite time horizon and let $\Omega_1$ be a Polish space. Denote by $\Omega^t:=\Omega^t_1$ the $t$-fold Cartesian product for $t=0,1,\dots,T$, where we use the convention that $\Omega^0$ is a singleton. Let $\cF_t:=\bigcap_P \cB(\Omega^t)^P$ be the universal completion of the Borel $\sigma$-field $\cB(\Omega^t)$; here $\cB(\Omega^t)^P$ denotes the $P$-completion of $\cB(\Omega^t)$ and $P$ ranges over the set $\fM_1(\Omega^t)$ of all probability measures on $(\Omega^t,\cB(\Omega^t))$. Moreover, define $(\Omega,\cF):=(\Omega^T,\cF_T)$. This plays the role of our initial measurable space.

For every $t \in  \{0,1,\dots,T-1\}$ and $\omega^t \in \Omega^t$ we fix a nonempty 
set $\fP_t(\omega^t)\subseteq\fM_1(\Omega_1)$ of probability measures; $\fP_t(\omega^t)$ represents the possible laws
for the $t$-th period given state $\omega^t$.  Endowing $\fM_1(\Omega_1)$ with the usual topology induced by the weak convergence makes it into a Polish space; see  \cite[Chapter~7]{BertsekasShreve.78}. We assume that for each $t$
\begin{equation*}
\mbox{graph}(\fP_t):=\{(\omega^t,P)\,| \, \omega^t \in \Omega^t,\, P\in \fP_t(\omega^t)\}\ \ \mbox{ is an analytic subset of }\ \ \Omega^t \times \fM_1(\Omega_1) .
\end{equation*}
Recall that a subset of a Polish space is called analytic if it is the image of a Borel subset of a (possibly different) Polish space under a Borel-measurable mapping (see \cite[Chapter~7]{BertsekasShreve.78}); in particular, the above assumption is satisfied if  $\mbox{graph}(\fP_t)$ is Borel. The set $\mbox{graph}(\fP_t)$ being analytic provides the existence of an universally measurable kernel $P_t:\Omega^t\to \fM_1(\Omega_1)$ such that $P_t(\omega^t)\in \fP_t(\omega^t)$ for all $\omega^t \in \Omega^t$ by the Jankov-von Neumann theorem, see \cite[Proposition~7.49, p.182]{BertsekasShreve.78}. Given such a kernel $P_t$ for each $t\in \{0,1,\dots,T-1\}$, we can define a probability measure $P$ on $\Omega$ by 
\begin{equation*}
P(A):=\int_{\Omega_1}\dots \int_{\Omega_1} \mathbf{1}_A(\omega_1,\dots,\omega_T)\, P_{T-1}(\omega_1,\dots,\omega_{T-1};d\omega_T)\dots P_0(d\omega_1), \quad A \in \cF,
\end{equation*}
where we write $\omega:=\omega^T:=(\omega_1,\dots,\omega_T)$ for any element in $\Omega$. We denote a probability measure defined as above by  $P=P_0 \otimes \dots \otimes P_{T-1}$.  For the  multi-period market, we consider the set 
\begin{equation*}
\fP:=\{P_0\otimes\dots\otimes P_{T-1}\,|\,  P_t(\cdot) \in \fP_t(\cdot), \, t=0,\dots,T-1\}\subseteq \fM_1(\Omega),
\end{equation*}
of probability measures representing the uncertainty of the law,
where in the above definition each $P_t:\Omega^t\to \fM_1(\Omega_1)$ is universally measurable such that $P_t(\omega^t)\in \fP_t(\omega^t)$ for all $\omega^t \in \Omega^t$.

We will often interpret $(\Omega^t,\cF_t)$ as a subspace of $(\Omega,\cF)$ in the following way. 
Any set $A \subset \Omega^t$ can be extended to a subset of $\Omega^T$ by adding $(T-t)$ products of $\Omega_1$, i.e. $A^T:=A \times \Omega_1\times\dots \times\Omega_1\subset \Omega^T$. Then, for every measure $P=P_0\otimes\dots\otimes P_{T-1} \in \fP$,
one can associate a measure $P^t$ on $(\Omega^t,\cF^t)$ such that $P^t[A]=P[A^T]$ 
by setting $P^t:=P_0\otimes \dots \otimes P_{t-1}$. 

We call a set $A\subseteq \Omega\,$ $\fP$-polar if $A\subseteq A'$ for some $A'\in \cF$ such that $P[A']=0$ for all $P \in \fP$, and say a property to hold $\fP$-quasi surely, or simply $\fP$-q.s., if the property holds outside a $\fP$-polar set. 

A map $\Psi\colon\Omega\times\R^{dT}\rightarrow\overline{\R}$ is called an $\cF$-\emph{measurable normal integrand} if the measurable correspondence $\hypo \Psi:\Omega\rightrightarrows\R^{dT}\times\R$ defined by
$$
\hypo \Psi(\omega)=\big\{(x,y)\in\R^{dT}\times\R\,\big|\,\Psi(\omega,x)\geq y\big\}
$$ 
is closed valued and $\cF$-measurable in the sense of set-valued maps, see \cite[Definition~14.1 and Definition~14.27]{RockafellarWets}. Note that the correspondence $\hypo \Psi$ has closed values if and only if the function $x\mapsto \Psi(\omega,x)$ is upper-semicontinuous for each $\omega$; see \cite[Theorem~1.6]{RockafellarWets}. By \cite[Corollary~14.34]{RockafellarWets}, 
$\Psi$ is (jointly) measurable with respect to $\cF\otimes\cB(\R^{dT})$ and $\cB(\overline \R)$. Classical examples of normal integrands, which are most prevalent in mathematical finance, are Caratheodory maps; see  \cite[Example~14.29]{RockafellarWets}.

Denote by $\cH$ the set of all $\F$-adapted $\R^d$-valued processes $H:=(H_0,\dots,H_{T-1})$ with discrete-time index $t=0,\dots,T-1$. Our goal is to study the following control problem
\begin{equation}\label{eq:optim-prob-multiP}
\sup_{H \in \cH}\inf_{P \in \fP} E^P[\Psi(H_0,\dots,H_{T-1})],
\end{equation}
where $\Psi\colon\Omega\times \R^{d T} \to \R$ is a concave and $\cF$-measurable normal integrand.

Recall that a function $f$ from a Borel subset of a Polish space into $[-\infty,\infty]$ is called lower semianalytic if the set $\{f<c\}$ is analytic for all $c\in \R$; in particular any Borel function is lower semianalytic. A concave function $f\colon\R^n\to \R \cup\{-\infty\}$ is called proper if $f(x)>-\infty$ for some $x \in \R^n$. We refer to \cite{BertsekasShreve.78} and \cite{Rockafellar.97,RockafellarWets} for more details about the theory of lower semianalytic functions and  convex analysis, respectively.

The following conditions are in force throughout the paper.
\begin{assumption}\label{ass:Psi-MultiP}
The map $\Psi:\Omega\times \R^{d T} \to \R\cup\{-\infty\}$ satisfies the following
\begin{enumerate}
\item[(1)] for every $\omega \in \Omega$, the map $x \mapsto \Psi(\omega,x)$ is concave and upper-semicontinuous;
\item[(2)] there exists a constant $C \in \R$ such that $\Psi(\omega,x)\leq C$ for all $\omega \in \Omega$, $x \in \R^{d T}$;
\item[(3)] the map $(\omega,x) \mapsto \Psi(\omega,x)$ is lower semianalytic;
\item[(4)] there exists $h^\circ \in \R^{dT}$, an $\varepsilon>0$ and a constant $c>0$ such that 
\end{enumerate}
\begin{align} \label{eq:interior}
\Psi(\omega,x)\geq -c\qquad \forall\omega\in\Omega,\,\forall x\in\R^{dT}:\,\,\|x-h^{\circ}\|\leq\varepsilon.
\end{align}
\end{assumption}
\begin{remark}\label{rem:bounded above}
At first glance, Assumption~\ref{ass:Psi-MultiP}(2) may look to be rather restrictive. However, it was shown in \cite[Example~2.3]{Nutz.13util}  that  for any (nondecreasing, strictly concave) utility function $U$ being \textit{unbounded from above}, one can construct a frictionless market $S$ and a set $\fP$ of probability measures such that
\begin{equation*}
u(x):=\sup_{H \in \cH}\inf_{P \in \fP} E^P[U(x+ H \sint S_T)]<\infty
\end{equation*}
for any initial capital $x>0$,
but there is no maximizer $\widehat{H}$. So already in the special case of 
$\Psi(H):=U(V(H))$ in the frictionless market, the existence may fail for utility functions not being bounded from above. 
\end{remark}
\begin{remark}
\label{rem:ass-normal-multiP}
The mapping $\Psi$ satisfying Assumption~\ref{ass:Psi-MultiP} is a normal integrand. Indeed,  Assumption~\ref{ass:Psi-MultiP}(3) and \cite[Lemma~7.29, p.174]{BertsekasShreve.78} imply that the map $\omega \mapsto \Psi(\omega,x)$ is $\cF$-measurable for every $x\in\R^{dT}$; recall that $\cF$ is the universal $\sigma$-field on $\Omega$. Therefore, normality of $\Psi$ follows directly from \cite[Proposition~14.39, p.666]{RockafellarWets}.
\end{remark}

\begin{remark}
Assumption~\ref{ass:Psi-MultiP}(4) is primarily a statement saying that the domain $\dom\Psi(\omega)$ of the mapping $x\mapsto\Psi(\omega,x)$ has an interior for each $\omega\in\Omega$. It is even stronger, since the strong lower bound on $\Psi$ on the neighborhood of $h^{\circ}$ implies that every strategy $H\in\cH$ satisfying $\|H(\omega)-h^{\circ}\|\leq\varepsilon$ for every $\omega\in\Omega$ satisfies also $\inf_{P\in\fP}\E^P[\Psi(H)]\geq -c$. 
This is a strong regularity condition that plays a crucial role in establishing measurability of the objects arising in the dynamic programming procedure. More precisely, if $h^\circ=0$, it implies that it is enough to know the market on the countable set of deterministic strategies $H\in\Q^{dT}$; c.f. Remark~\ref{ex:constraint}. It can certainly be relaxed, however it is difficult to come up with a set of conditions on $\Psi$ that could be checked a priori. The most general sets of conditions should be stated in terms of objects arising in the dynamic programming procedure; see Remark~\ref{rem:on assumption 2.1}(3) and the proof of Lemma~\ref{le:Psi-t-lsa-MultiP}.
\end{remark}

\begin{remark}\label{rem:normal-differ}
We point out that our definition of a normal integrand $\Psi$ varies from the classical one in convex optimization as defined e.g. in~\cite[Chapter 14]{RockafellarWets}, in the sense that $-\Psi$ is a normal intergrand in  classical convex analysis. As we are looking for a maximum of a concave function, our definition of a normal function fits into our setting.
\end{remark}

We now define the horizon function
$\Psi^\infty:\Omega\times\R^{dT}\rightarrow\R\cup\{-\infty\}$ of a concave, proper, upper-semicontinuous  integrand $\Psi(\omega,\cdot)$ by
\begin{equation*}
\Psi^\infty(\omega,h) 
= \lim_{n\rightarrow\infty}\frac1n[\Psi(\omega,x+nh)-\Psi(\omega,x)]
=\inf_{n\in \N}\frac1n[\Psi(\omega,x+nh)-\Psi(\omega,x)]
\end{equation*}
where $x\in \R^{dT}$ is any vector with $\Psi(\omega,x)>-\infty$.
Note that for any fixed $\omega\in \Omega$, the map $h\mapsto\Psi^\infty(\omega,h)$ does not depend on the choice of $x$ in the definition; see~\cite[Theorem~3.21, p.87]{RockafellarWets}. The mapping $\Psi^\infty(\omega,\cdot)$ is positively homogeneous, concave and upper-semicontinuous, see~\cite[Theorem~3.21, p.87]{RockafellarWets}. If in addition, $\Psi$ is normal, then so is $\Psi^\infty$, see \cite[Exercise 14.54(a), p.673]{RockafellarWets}.

Throughout the paper we impose the following  condition.
\begin{equation}\label{eq:no-arbitrage-multiP}
\mbox{the set}
\quad
\cK:=\{H \in \cH \, | \, \Psi^\infty(H_0,\dots,H_{T-1})\geq 0 \ \fP\mbox{-q.s.}\}
\quad
\mbox{ is linear.}
\tag{\ensuremath{\NA(\fP)}}
\end{equation}
We call it the no-arbitrage condition.
Of course, the set $\cH$ of adapted strategies is a linear space, hence $\cK$ is a subset of it and the definition makes sense. Note that the set $\cK$ is a convex cone; this follows directly from concavity of the map $\Psi^\infty$. 
\begin{remark}\label{rem:NA}
Naming the condition  $\NA(\fP)$ a (robust) no-arbitrage condition is motivated by the following observation: consider a frictionless market with corresponding price process $S$
and let
\begin{equation*}
\Psi(H):= \sum_{t=0}^{T-1} H_t \cdot(S_{t+1}-S_t)=: H \sint S_T
\end{equation*}
denote the capital gains from trading in the market using strategy $H \in \cH$, our notion of $\NA(\fP)$ coincides with the robust no-arbitrage notion in \cite{BouchardNutz.13,Nutz.13util}. This follows directly since linearity of $\Psi$ in this situation implies that $\Psi^\infty= \Psi$.
Indeed, if $\cK$ were not linear, there would exist a strategy $H\in\cK$  such that $-H\not\in\cK$. But this implies that $H \sint S_T\geq 0 \ \fP$-q.s. and there exists a measure $P\in\fP$ with $P[-H\sint S_T< 0]>0$, i.e.  $P[H\sint S_T>0]>0$, which means that $H$ is a  robust arbitrage strategy in the sense of \cite{BouchardNutz.13,Nutz.13util}; and vice versa.

Moreover, when the map $\Psi$ is of the form initially considered, i.e. given by
\begin{equation*}
\Psi(H)=U(V(H)),
\end{equation*}
a sufficient condition (independent of the utility function $U$!) for $\Psi$ to satisfy  $\NA(\fP)$  is  $V$ satisfying the following condition
\begin{equation*}
\mbox{the set}
\quad
\{H \in \cH \, | \, V^\infty(H)\geq 0 \ \fP\mbox{-q.s.}\}
\quad
\mbox{ is linear.}
\end{equation*}
This is a condition on the financial market model, where  $V(H)$ denotes the terminal wealth when investing with strategy $H$. Indeed, let $U$ be a nondecreasing, concave utility function such that 
$\Psi$ is not identically equal to $-\infty$. Then by Lemma~\ref{le:comp-horizon-append} 
\begin{equation*}
\Psi^\infty(h)=\begin{cases}
U^\infty(V^\infty(h)) & \mbox{if } \  V^\infty(h)>-\infty\\
-\infty& \mbox{otherwise}.
\end{cases}
\end{equation*}
We claim that linearity of the set $\{H \in \cH \, | \, V^\infty(H)\geq 0 \ \fP\mbox{-q.s.}\}$ implies linearity of the set $\{H \in \cH \, | \, \Psi^\infty(H)\geq 0 \ \fP\mbox{-q.s.}\}$. To see this, observe that $U^\infty(0)=0$, and $U^\infty$ is nondecreasing as $U$ is so, too. Let $H \in \cH$ satisfy $\Psi^\infty(H) \geq 0 \ \fP$-q.s. By the monotonicity of $U^\infty$ and as $\Psi^\infty=U^\infty \circ V^\infty$, this means that $V^\infty(H)\geq 0  \ \fP$-q.s. By assumption on the linearity of the set $\{H \in \cH \, | \, V^\infty(H)\geq 0 \ \fP\mbox{-q.s.}\}$ we have $V^\infty(-H)\geq 0  \ \fP$-q.s., which implies that $U^\infty(V^\infty(-H)) \geq 0  \ \fP$-q.s.. Linearity of the set $\{H \in \cH \, | \, \Psi^\infty(H)\geq 0 \ \fP\mbox{-q.s.}\}$ now follows.
%
%
%

Generalized notions of no-arbitrage conditions in form of linearity type conditions were already obtained in~\cite{MR2828763,PennanenPerkkio.12} 
for markets without uncertainty (i.e. where one fixed measure $P$ is given). Our no-arbitrage condition $\NA(\fP)$ can be interpreted as an extension of linearity-type of no-arbitrage conditions to the robust framework.
\end{remark}

The main theorem of this paper is the following.
\begin{theorem}\label{thm:Maxim-exist-MultiP}
Let $\Psi$ be a map satisfying Assumption~\ref{ass:Psi-MultiP}. If the no-arbitrage condition $NA(\fP)$ holds, then there exists a process $\widehat{H} \in \cH$ such that
\begin{equation}\label{eq:thm-optimal-MultiP}
\inf_{P \in \fP} E^P[\Psi(\widehat{H}_0,\dots,\widehat{H}_{T-1})]= \sup_{H \in \cH}\inf_{P \in \fP} E^P[\Psi(H_0,\dots,H_{T-1})].
\end{equation}
\end{theorem}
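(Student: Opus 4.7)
The plan is to proceed by backward dynamic programming on the Bouchard--Nutz product structure set up in Section~\ref{sec:Multi-Per}. Setting $\Psi_T := \Psi$, I would define recursively, for $t=T-1,\dots,0$,
\begin{equation*}
\Psi_t(\omega^t, x^t) := \sup_{h\in\R^d}\,\inf_{P\in\fP_t(\omega^t)} \int_{\Omega_1}\Psi_{t+1}\bigl(\omega^t,\omega_{t+1};\, x^t,h\bigr)\, P(d\omega_{t+1}),
\end{equation*}
where $x^t=(x_0,\dots,x_{t-1})\in\R^{dt}$ encodes the already-fixed strategy components. The target is to show by backward induction that each $\Psi_t$ inherits the analogue of Assumption~\ref{ass:Psi-MultiP}: concavity and upper semicontinuity in $x^t$, a uniform upper bound, joint lower semianalyticity in $(\omega^t,x^t)$, and an interior-point estimate at the truncated point $(h^\circ_0,\dots,h^\circ_{t-1})$. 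Simultaneously, the lineality-type $\NA(\fP)$ must survive each backward step in a suitable one-period form.

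The engine of the induction is the one-period problem of Section~4. For fixed $(\omega^t,x^t)$, the inner function
\begin{equation*}
\phi(h) := \inf_{P\in\fP_t(\omega^t)}\int \Psi_{t+1}(\omega^t,\cdot\,;x^t,h)\, dP
\end{equation*}
is concave and upper semicontinuous in $h$ (upper semicontinuity via the upper bound of Assumption~\ref{ass:Psi-MultiP}(2) and reverse Fatou). Rockafellar's Theorem~9.2 then delivers a maximizer as soon as the cone $\{h \mid \phi^\infty(h)\geq 0\}$ is a linear subspace, and the interior-point condition guarantees $\phi$ is proper and hence the hypothesis of that theorem is meaningful. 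Thus the one-period existence reduces to a \emph{local} lineality statement, and the outer maximizer $h^*(\omega^t,x^t)$ can then be selected universally measurably via the Jankov--von Neumann theorem applied to the analytic set $\{(\omega^t,x^t,h) \mid \phi(h)\geq \Psi_t(\omega^t,x^t)\}$, provided lower semianalyticity of $\Psi_t$ has been propagated using \cite[Chapter~7]{BertsekasShreve.78}.

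The hard part, as the excerpt itself emphasizes, is the propagation of the global condition $\NA(\fP)$ into this local lineality at every time step. Concretely, I would show that if the local lineality fails on a non-$\fP^t$-polar set of $(\omega^t,x^t)$, then one can measurably select a direction $h\in\R^d$ and, via the Jankov--von Neumann theorem applied to $\mbox{graph}(\fP_t)$, concatenate with later-time strategies so as to build a global $H\in\cH$ for which $\Psi^\infty(H)\geq 0$ $\fP$-q.s.\ while $\Psi^\infty(-H)\geq 0$ fails on a non-polar set, contradicting $\NA(\fP)$. Two ingredients make this delicate: first, one needs a horizon-function calculus relating $\Psi_{t+1}^\infty$ to the horizon of the integrand inside $\phi$, which is where the interior-point Assumption~\ref{ass:Psi-MultiP}(4) enters, since it ensures $\dom\Psi_{t+1}(\omega^{t+1},\cdot)$ has non-empty interior $\fP$-q.s.\ so that recession cones behave under expectation and infimum as expected; second, the gluing of exceptional sets across $T$ time steps must respect the product structure of $\fP$ without losing measurability.

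Once the induction closes, I would conclude as follows. The selection $\widehat H_t(\omega^t) := h^*(\omega^t,\widehat H_0(\omega^1),\dots,\widehat H_{t-1}(\omega^{t-1}))$, assembled forward in $t$, yields an adapted process $\widehat H \in \cH$. A Fubini/tower argument along the product decomposition $P = P_0\otimes\cdots\otimes P_{T-1}$, combined with the minimax identity embedded in the definition of $\Psi_t$, then gives $\inf_{P\in\fP}E^P[\Psi(\widehat H_0,\dots,\widehat H_{T-1})] = \Psi_0$, and a standard comparison shows $\Psi_0$ is also an upper bound for the right-hand side of~\eqref{eq:thm-optimal-MultiP}, proving equality and hence optimality of $\widehat H$.
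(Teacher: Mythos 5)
Your architecture is the paper's: backward recursion $\Psi_T=\Psi$, one-period existence via Rockafellar's Theorem~9.2, propagation of the global lineality condition to a local one at each step by a contradiction/measurable-selection argument, and forward assembly with a Fubini/tower comparison. However, two of the specific mechanisms you propose would fail as stated, and these are precisely the points where the paper does something different. First, you cannot select the exact maximizer by applying Jankov--von Neumann to the set $\{(\omega^t,x^t,h)\mid \phi(h)\geq \Psi_t(\omega^t,x^t)\}$: both $\phi$ and $\Psi_t$ are merely lower semianalytic, and the super-level set of a difference of two lower semianalytic functions is in general neither analytic nor coanalytic, so the Jankov--von Neumann theorem does not apply to it. The paper instead establishes that $\Phi_t$ is an $\cF_t$-normal integrand (with $\cF_t$ the universal completion) and selects from the $\argmax$ correspondence via \cite[Theorem~14.37]{RockafellarWets}; exact (rather than $\varepsilon$-optimal) selection really does require this normal-integrand route. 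Second, your plan to ``propagate lower semianalyticity using \cite[Chapter~7]{BertsekasShreve.78}'' hides the actual obstruction: the supremum over the uncountable set $\R^d$ of lower semianalytic functions need not be lower semianalytic. The paper's resolution is a density argument (Lemma~\ref{le:dense-MultiP}): by concavity and the interior point $h^\circ$ of Assumption~\ref{ass:Psi-MultiP}(4), the supremum over $\R^d$ agrees, after the dilation $\lambda x+(1-\lambda)h^{\circ,t}$, $\lambda\nearrow1$, with the countable supremum over $\Q^d$. This is the true role of Assumption~\ref{ass:Psi-MultiP}(4), not the recession-cone calculus you attribute to it.

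Two smaller points. You define $\Psi_t$ directly as $\sup_h\Phi_t$, but a priori this supremum need not be upper semicontinuous in $x^t$, so your induction hypothesis cannot be verified for all $\omega^t$; the paper defines $\Psi_t$ as the upper-semicontinuous hull of $\widetilde\Psi_t=\sup_h\Phi_t$ and only afterwards shows (Proposition~\ref{prop:Psi-MultiP-OneP}(i), using the local no-arbitrage condition) that the two coincide $\fP$-quasi surely. And the ``standard comparison'' giving $\Psi_0\geq\sup_H\inf_P E^P[\Psi(H)]$ is not free: it needs a universally measurable $\varepsilon$-optimal selection of kernels $P^\varepsilon_t(\omega^t)\in\fP_t(\omega^t)$ (Lemma~\ref{le:Prob-Meas-Approx-MultiP}, via \cite[Proposition~7.50]{BertsekasShreve.78}) before one can integrate and let $\varepsilon\to0$.
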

We will give the proof of this theorem in Section~\ref{sec:proof-multi-period}.
%
%
\section{Examples}\label{sec:examples}
In this section, we give several examples of robust utility maximization in various models of financial markets fitting into the setting of Theorem~\ref{thm:Maxim-exist-MultiP}. This was our initial motivation for the abstract robust optimization problem. 
\begin{example}\label{ex:frictionless}
In this example, we analyze the robust utility maximization problem in a classical frictionless market similar to \cite{Nutz.13util}. 

Let $S=(S^1,\dots,S^d)$ be a $d$-dimensional stock price process with nonnegative components being Borel-measurable and constant $S^j_0=s^j_0>0$ for all $j$.
Consider a random utility function $U\colon\Omega\times\R\rightarrow\R \cup\{-\infty\}$, i.e. $U(\omega,\cdot)$ is a nondecreasing, concave function, which is upper-semicontinuous and bounded from above by a constant. Moreover, assume that $(\omega,y)\mapsto U(\omega,y)$ is lower semianalytic and $\omega\mapsto U(\omega,y)$ is bounded from below for each $y>0$; the last two conditions are trivially satisfied if $U\colon\R \rightarrow \R\cup\{-\infty\}$ is a classical utility function independent of $\omega$ satisfying $U(y)>-\infty$ for $y\in(0,\infty)$. We define the mapping $\Psi$ by
$$
\Psi(H) = U(x + H\sint S_T),
$$
where $x>0$ is the fixed initial wealth of the trader.
We want to show that $\Psi$ we just defined satisfies Assumption~\ref{ass:Psi-MultiP}.  To that end, note that it is concave as $\Psi(\omega,\cdot)$ is a compositum of a concave and a linear function. Also upper-semicontinuity is clear as $U$ is 
upper-semicontinuous and $H\mapsto H \sint S_T$ is continuous for every $\omega$. 
As the utility function is bounded from above, the same holds for the mapping $\Psi$. Due to the assumption on $U$ being lower semianalytic, the same holds true for $\Psi$ being a precomposition of a lower semianalytic function with a Borel function; see \cite[Lemma~7.30(3), p.177]{BertsekasShreve.78}. 

To see that Assumption~\ref{ass:Psi-MultiP}(4) is satisfied, set
$$
\rho := \frac{x}{2dT\max\{ s^j_0 \,:\,j=1,\ldots,d\}}
$$
and define the deterministic strategy $h^{\circ}_t:=(h^{\circ,1}_t,\dots,h^{\circ,d}_t) \in \R^{dT}$ by  
%
$$
h^{\circ,j}_t := (T-t)\rho,\qquad \mbox{ for }t=0,\ldots,T-1,\ \ j=1,\ldots,d. 
$$
Let $\varepsilon<\frac\rho3$. It is easy to see that any $z\in\R^{dT}$ satisfying $\|z-h^\circ\|\leq\varepsilon$ is decreasing in each of the components; i.e. the (deterministic) process  $(z^j_t)_{t\in \{0,\dots,T-1\}}$ is decreasing for each $j=1,\ldots,d$. We claim that the corresponding capital gains $z\sint S_T$ at time $T$ are uniformly bounded from below. To see this, let $z\in\cH$ be one of such (deterministic) strategies. Writing $z_{-1}=0=z_T$, we obtain the corresponding capital gains
$$
z\sint S_T 
=
\sum_{t=0}^{T-1} z_t\cdot(S_{t+1}-S_t)
= 
- \sum_{t=0}^{T} (z_t-z_{t-1}) \cdot S_t.
$$
By assumption on  nonnegativity of each stock price process $S^j$ and the fact that the strategy $(z_t^j)_{t\in \{0,\dots,T-1\}}$ is decreasing, it follows that $-\sum_{t=1}^{T} (z_t-z_{t-1}) \cdot S_t\geq 0$. To see that also $-(z_0-z_{-1}) \cdot S_0$ is bounded from below, we use the definition of $\rho$ to see that
$$
-(z_0-z_{-1}) \cdot S_0
=
-\sum_{j=1}^d(z_0^j-z_{-1}^j) S_0^j
=
-\sum_{j=1}^d z_0^j S_0^j
\geq 
-\sum_{j=1}^d\rho\left(\frac13+T\right)s_0^j
> -x.
$$
Thus, the claim holds true, i.e. for some constant $\delta$, $x+z\sint S_T\geq \delta> 0$ for each $(z_t)$ satisfying $\|z-h^\circ\|\leq\varepsilon$, and each $\omega$. Therefore,  $\Psi$ satisfies  Assumption~\ref{ass:Psi-MultiP}(4), as we assumed that $\omega\mapsto U(\omega,\delta)$ is bounded from below.
\end{example}

\begin{remark}
Our main theorem is more general than the result from \cite{Nutz.13util} in a few directions. First, we do not assume that the stock price process is adapted and therefore also include e.g. the setup of
\cite{kabanov2006dalang}. Moreover, we do not impose 
the assumption that the utility function is defined on the positive half-line. This allows one to include, for instance, the exponential utility function. 
\end{remark}

\begin{remark}\label{ex:constraint}
One models portfolio constraints by correspondences $D_t\colon\Omega^t\rightrightarrows\R^d$ requiring that $H_t(\omega)\in D_t(\omega) \ \forall \omega,t$.
Convex constraints $D_t\colon\Omega^t\rightrightarrows\R^d$, which are given by Borel measurable, closed-valued correspondences are included in our model as long as either $D_t=\{h^\circ\}$ or $D_t$ has an interior in such a way that Assumption~\ref{ass:Psi-MultiP}(4) is satisfied. However, one can read out from the proof of our main theorem that Assumption~\ref{ass:Psi-MultiP}(4) can be relaxed for constraints, i.e. it is not necessary for the sets $D_t$ to include a ball around $h^\circ$, it is enough for them to have an open interior for the dynamic programming with the criterion function
$$
\widehat\Psi(\omega,h)
= \Psi(\omega,h) - \sum_{t=0}^{T-1} \chi_{D_t(\omega)}(h_t)
$$
to give the existence of a maximizer; here $\chi_A$ denotes the convex analytic indicator function for the set $A\equiv D_t(\omega)$ giving the value 0 if $h_t\in A$ and $\infty$ otherwise. 
\end{remark}

\begin{example}
In this example, we consider the financial market from 
\cite{DolinskySoner.13} with  proportional transaction costs. The mark-to-market value of the portfolio strategy $H$ with initial (fixed) capital $x> 0$ is defined by
$$
V(H) = x + \sum_{t=0}^{T-1} H_t\cdot(S_{t+1}-S_t) - \kappa S_t |H_t-H_{t-1}|,
$$
where we set $H_{-1}=0$. The one-dimensional stock price process $(S_t)$ is assumed to be Borel-measurable 
and nonnegative starting at a constant $S_0=s_0>0$. The constant $1>\kappa\geq 0$ indicates the amount of transaction costs. One then defines  $\Psi$ to be
$$
\Psi(\omega,H) = U(V(H))
$$
for a random utility function $U\colon\Omega\times\R\rightarrow\R\cup\{-\infty\}$ being defined as in Example~\ref{ex:frictionless}. 
It is easy to check that the conditions of Assumption~\ref{ass:Psi-MultiP} are satisfied. Concavity and upper-semicontinuity is clear by the fact that $U$ is upper-semicontinuous, concave, nondecreasing and $V$ is concave and continuous. Boundedness from above is clear by the same assumption on the utility function $U$. The lower semianalyticity of $\Psi$ is fulfilled as $S$ is assumed to be Borel.  Now, rewrite the value of the strategy as follows
\begin{align*}
V(H)
&= 
x - \sum_{t=0}^{T-1} S_t\big(\kappa  |H_t-H_{t-1}| + (H_t-H_{t-1})\big) + H_{T-1}S_T
\\
&= 
x - \sum_{t=0}^{T-1} S_t f(H_t-H_{t-1}) + H_{T-1}S_T,
\end{align*}
where the function $f\colon h\mapsto\kappa  |h| + h$ is less than or equal to zero for all $h\leq 0$ by assumption on the constant $\kappa$. Define $(h^\circ_t)_{t=0,\dots,T-1} \in \R^T$ and $\rho$ as in Example~\ref{ex:frictionless}, and choose $\varepsilon<\rho\min\{\frac{1}{3},\frac{T(1-\kappa)}{1+\kappa}\}$. We know from Example~\ref{ex:frictionless} that any $H \in \cH$ satisfying  $\|H-h^\circ\|\leq\varepsilon$ is positive and decreasing, hence
\begin{equation*}
-\sum_{t=1}^{T-1} S_t f(H_t-H_{t-1}) + H_{T-1}S_T\geq 0.
\end{equation*}
Moreover, as $\varepsilon<\rho \frac{T(1-\kappa)}{1+\kappa}$, it is straightforward to see that
\begin{equation*}
-S_0 f(H_0-H_{-1})=-s_0 (\kappa H_0 + H_0)\geq -s_0(T\rho+\varepsilon)(1+\kappa)>-x.
\end{equation*}
Therefore, there exists a constant $\delta$ such that $V(H)\geq\delta>0$ for all $\omega$ and all $H$ satisfying $\|H-h^\circ\|\leq\varepsilon$. Hence $\Psi=U\circ V$ satisfies Assumption~\ref{ass:Psi-MultiP}(4), as we assumed that $\omega\mapsto U(\omega,\delta)$ is bounded from below.
\end{example}
%
%
%
\begin{remark}
In Remark~\ref{rem:NA} we motivated the name `no-arbitrage' for the condition $\NA(\fP)$ by indicating that it coincides with the definition of robust no-arbitrage condition introduced in~\cite{BouchardNutz.13}; we sketch a similar argument for the case of proportional transaction costs. Questions of no-arbitrage in models with proportional transaction costs are usually addressed in the setting of~\cite{Kabanov.99}. The fundamental contribution to the no-arbitrage theory in the setup without model uncertainty was made in~\cite{Schachermayer.04} where the concept of robust no-arbitrage was introduced. More precisely, it was shown that its robust no-arbitrage condition is equivalent to linearity of a certain set of portfolio rebalancings; see~\cite[Lemma~2.6]{Schachermayer.04}.
One can translate the example above to the framework of~\cite{Kabanov.99}.
Observe that our condition $\NA(\fP)$ in fact generalizes the robust no-arbitrage condition of~\cite{Schachermayer.04}, i.e. an equivalent formulation thereof, to this non-dominated setup.
\end{remark}
\begin{remark}
The example above treated the mark-to-market value in the market. One could equivalently consider the liquidation value of the portfolio $H$. Also, it is easy to see that this is not restricted to proportional transaction costs, but can be extended to more general transaction costs. The important thing is the condition Assumption~\ref{ass:Psi-MultiP}(4) and the example above indicates where it comes up in the argument.
\end{remark}

\begin{remark}
One could also consider (proportional) transaction costs in physical units.
We will only sketch this example.
There are $d$ risky assets in the market and the portfolio of the trader is described by specifying at each time $t$ the number of shares in each of the $d$ risky assets. Going back to the original contribution of 
\cite{Kabanov.99}, one models trading in the market by specifying in each time instance $t=0,\ldots,T-1$ how many shares $H_t^{\alpha\rightarrow \beta}$ of risky asset $\alpha$ to transfer to shares of the risky asset $\beta$. Hence, a strategy will be a matrix with adapted entries. The market mechanism, i.e. changes of the portfolio due to a trade order, is given by a sequence of maps
$$
F_t:\Omega^t\times\R^{d\times d}\longrightarrow \R^d.
$$
The interpretation is the following: after executing the order $H$ at time $t$, the holdings of the trader in physical units are going to change by $F_t(H).$ So, the portfolio of the trader at the end of the trading period is given by
$$
V(H) = \sum_{t=0}^{T-1} F_t(H_t).
$$
We also refer to \cite{Bouchard.06}. 

Now, let the utility function be denoted $U\colon\R^d\rightarrow\R$, and define the map $\Psi$  by
$$
\Psi(H) = U(V(H)).
$$
One can easily find conditions under which Assumption~\ref{ass:Psi-MultiP} is satisfied. For instance: the utility function $U$ is bounded from above, concave, nondecreasing, and upper-semicontinuous with respect to partial order given by an order cone $\R_+^d$; the market impact functions $F_t$ being continuous and concave with respect to the same partial order; appropriate measurability conditions and boundedness conditions to obtain conditions Assumption~\ref{ass:Psi-MultiP}(3,4).
\end{remark}

\begin{example} 
In this example, we consider the optimal liquidation problem, an adaptation of the model introduced in~\cite{AlmgrenChriss.00}. The agent initially holds $X>0$ units of a security, which need to be liquidated by the time of maturity $T$. The strategy $H=(H_0,\dots,H_{T-1})$  denotes that the agent holds $H_t$ units of the security after liquidation of the amount $H_{t-1}-H_t$ at time $t$; in the interpretation of~\cite{AlmgrenChriss.00} the agent liquidates this amount between times $t-1$ and $t$. This liquidation yields $(H_{t-1}-H_t)\widehat{S}_t$ on the bank account, where $\widehat{S}_t$ encodes the price achieved per unit of security, given the temporary price impact of trading. Temporary price impact is modeled by the function $\mathfrak g$, writing $\widehat{S}_t= S_t - \mathfrak g(H_{t-1}-H_t)$, where $S_t$ would denote the stock price at time $t$ where there is no trading in the market. We set $H_{-1}=X$ and $H_T=0$. After liquidation the amount on the bank account $V(H)$ is given by
\begin{align*}
V(H)
&:=
\sum_{t=0}^T (H_{t-1}-H_{t}) \widehat{S}_t\\
&=
\sum_{t=0}^T (H_{t-1}-H_{t}) S_t - \sum_{t=0}^T (H_{t-1}-H_{t})\mathfrak g(H_{t-1}-H_t) \\
&=
XS_0 + \sum_{t=0}^{T-1} H_t(S_{t+1} - S_{t}) - \sum_{t=0}^T (H_{t-1}-H_{t})\mathfrak g(H_{t-1}-H_t).
\end{align*} 
We assume that the function $\mathfrak g$ is convex, lower-semicontinuous and nondecreasing, and satisfies $\mathfrak g(0)=0$. 
Moreover, we assume that $\mathfrak g(x)<\infty$ for all $x\geq0$.
Then the above defines a concave market model. The first term above does not depend on the strategy, the second one depends linearly and the last term is obviously concave; indeed, the function $x\mapsto x\mathfrak g(x)$ is, under stated conditions, clearly convex on $x\geq0$. 

Let  $U\colon\R\rightarrow\R\cup\{-\infty\}$ be a nondecreasing, upper-semicontinuous,  concave utility function bounded from above. Define the map $\Psi$ by $\Psi(H):= U(V(H))$. Clearly, $\Psi(\omega,\cdot)$ is concave and upper-semicontinuous as $V(\cdot)$ is concave, upper-semicontinuous and $U$ is concave, upper-semicontinuous and nondecreasing. Boundedness from above follows from the same assumption on $U$, and Borel measurability holds whenever the stock price process $(S_t)$ is Borel.

We assume that the stock price process $(S_t)$ is bounded below by some constant $\gamma\leq 0$, i.e. $S_t(\omega)\geq \gamma$ for all $\omega$. We then show that Assumption~\ref{ass:Psi-MultiP}(4) is satisfied by the model. Define the sequence that encodes constant rate of liquidation
\begin{align*}
h^{\circ}_t:= X\Big(1-\frac{t+1}{T+1}\Big), \quad t\in \{0,\dots,T-1\}.
\end{align*}
and choose $\varepsilon<\frac{X}{3(T+1)}$. Then, it is straightforward to see that for any $(x_t)$ satisfying $\Vert x-h^{\circ}\Vert<\varepsilon$, setting also $x_{-1}=X$ and $x_T=0$, we have
\begin{align*}
V(x)
&=
\sum_{t=0}^T (x_{t-1}-x_t) S_t - \sum_{t=0}^{T-1}  (x_{t-1}-x_{t})\mathfrak g(x_{t-1}-x_{t})\\
&\geq
\frac{5\gamma X}{3(T+1)} -  \frac{5 X}{3(T+1)}\mathfrak g\left(\frac{5 X}{3(T+1)}\right).
\end{align*}
Therefore, $\Psi$ satisfies Assumption~\ref{ass:Psi-MultiP}.
\end{example}
%
\section{One-Period-Model}\label{sec:1-Per}
The key technique of the proof of Theorem~\ref{thm:Maxim-exist-MultiP} is dynamic programming, i.e. backward induction, where one optimizes the strategy over each single time step and then 'glues' the single step strategies together. To that end, it is necessary to analyze the corresponding one-period model first, which we do in this section. In the one-step case one may prove the result in a more general setup than the multi-period case. For that reason we first provide the setup for the one-period case.

\subsection{Setup}\label{subsec:1-Per-setup}
Let $\Omega$ be a Polish space, $\cF$ be the universal completion of the Borel $\sigma$-field $\cB(\Omega)$, 
and $\fP$ be a  possibly nondominated set of probability measures on $\cF$.

We fix a function $\Psi \colon \Omega\times \R^{d} \to \R\cup\{-\infty\}$ and consider the optimization problem
\begin{equation}\label{eq:optimization-problem-1-Per}
\sup_{h \in \R^d} \inf_{P \in \fP} E^P[\Psi(h)].
\end{equation}
Throughout this section, we will work with the following conditions on $\Psi$.
\begin{assumption}\label{ass:1-Per}
The map $\Psi\colon \Omega\times \R^{d} \to \R\cup\{-\infty\}$ satisfies 
\begin{enumerate}
\item[(1)] for every $\omega \in \Omega$, the map $x \mapsto \Psi(\omega,x)$ is concave and upper-semicontinuous;
\item[(2)] there exists a constant $C \in \R$ such that $\Psi(\omega,x)\leq C$ for all $\omega \in \Omega, x \in \R^{d}$;
\item[(3)]  the map $(\omega,x)\mapsto \Psi(\omega,x)$ is 
lower semianalytic;
\item[(4)] the optimization problem is non-trivial, i.e. there exists a strategy $h^\circ\in\cH=\R^d$, such that $\inf_{P\in\fP} E_P[\Psi(h^\circ)]>-\infty.$
\end{enumerate}
\end{assumption}
\begin{remark}
Let us compare Assumption~\ref{ass:1-Per} with Assumption~\ref{ass:Psi-MultiP} of the general multi-period case. First note that points (1), (2) and (3) are the same in both cases. In fact, Assumption~\ref{ass:1-Per}(3) insures that for every $x \in \R^d$ the random variable $\Psi(x)$ is measurable. This can be achieved e.g. by assuming the weaker condition of $(\omega,x)\mapsto \Psi(\omega,x)$ being universally measurable; see \cite[Lemma~7.29, p.174]{BertsekasShreve.78}. However in the multi-period case, lower semianalyticity cannot be relaxed without losing the measurability setting needed to be able to apply crucial measurable selection arguments; see the proof of Proposition~\ref{prop:local-NA} and Proposition~\ref{prop:no-arbitrage-up-time-t-MultiP}.
Assumption~\ref{ass:1-Per}(4) just requires that the optimization problem is well posed, i.e. the value is not equal to $-\infty$ for all $x \in \R^d$. The stronger Assumption~\ref{ass:Psi-MultiP}(4) serves a different purpose; see Remark~\ref{rem:on assumption 2.1}(3).
\end{remark}
We work under the following no-arbitrage condition 
\begin{align}\label{eq:def-NA-1-Per}
\textrm{the set }\quad
\cK := \big\{h\in\R^d\,\big|\,\Psi^\infty(h)\geq 0\quad\fP\textrm{-q.s.}\big\}
\quad \textrm{is linear}.
\end{align}
The main result of this section is the existence of a maximizer for the one-period optimization problem \eqref{eq:optimization-problem-1-Per}.
\begin{theorem}\label{thm:1-Per}
Let the no-arbitrage condition \eqref{eq:def-NA-1-Per} and Assumption~\ref{ass:1-Per} hold. There exists a strategy $\widehat{h} \in \R^d$ such that
\begin{equation}\label{eq:thm:1-Per}
\inf_{P\in\fP} E_P[\Psi(\widehat h)]= \sup_{h\in\R^d}\inf_{P\in\fP} E_P[\Psi(h)],
\end{equation}
i.e. the  optimization problem admits an optimizer.
\end{theorem}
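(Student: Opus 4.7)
The plan is to reduce Theorem~\ref{thm:1-Per} to a direct application of \cite[Theorem~9.2, p.75]{Rockafellar.97} applied to the value function
\begin{equation*}
\phi(h) := \inf_{P \in \fP} E_P[\Psi(h)], \qquad h \in \R^d.
\end{equation*}
It suffices to verify that $\phi$ is a proper, upper-semicontinuous concave function whose recession cone $\{d \in \R^d : \phi^\infty(d) \geq 0\}$ is a linear subspace; the cited theorem then delivers a maximizer $\widehat h$.

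First I would check the analytic properties of $\phi$. For each fixed $P \in \fP$, the mapping $h \mapsto E_P[\Psi(h)]$ is well-defined in $[-\infty, C]$ by Assumption~\ref{ass:1-Per}(2), is concave by concavity of $\Psi(\omega,\cdot)$, and is upper-semicontinuous by the reverse Fatou lemma together with Assumption~\ref{ass:1-Per}(1). Taking the infimum over $P$ preserves these properties (the pointwise infimum of upper-semicontinuous functions is again upper-semicontinuous), so $\phi$ is concave, upper-semicontinuous and bounded above by $C$. Assumption~\ref{ass:1-Per}(4) yields $\phi(h^\circ) > -\infty$, so $\phi$ is proper. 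The same two assumptions imply that $\Psi(\omega, h^\circ) > -\infty$ $\fP$-q.s.; combined with $\Psi(\omega,\cdot) \leq C$, the horizon function then satisfies the key sign condition $\Psi^\infty(\omega, d) \leq 0$ for $\fP$-q.s.\ $\omega$ and every $d \in \R^d$.

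The core step is to identify $\{d : \phi^\infty(d) \geq 0\} = \cK$. One direction is immediate: if $d \in \cK$, then $\Psi^\infty(d) \geq 0$ $\fP$-q.s., and the monotonicity of the concave difference quotient gives $\Psi(h^\circ + td) \geq \Psi(h^\circ)$ $\fP$-q.s.\ for every $t>0$; integrating against any $P \in \fP$ and taking the infimum yields $\phi(h^\circ + td) \geq \phi(h^\circ)$, hence $\phi^\infty(d) \geq 0$. For the converse, suppose $\phi^\infty(d) \geq 0$. Then $\phi(h^\circ + td) \geq \phi(h^\circ) > -\infty$ for every $t > 0$, so for each $P \in \fP$,
\begin{equation*}
E_P\!\left[\frac{\Psi(h^\circ + td) - \Psi(h^\circ)}{t}\right] \geq \frac{\phi(h^\circ) - E_P[\Psi(h^\circ)]}{t} \longrightarrow 0 \quad \textrm{as } t \to \infty.
\end{equation*}
The integrand is decreasing in $t$ toward $\Psi^\infty(d)$ and is dominated above by the integrable random variable $\Psi(h^\circ + d) - \Psi(h^\circ)$, so monotone convergence yields $E_P[\Psi^\infty(d)] \geq 0$. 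Combined with $\Psi^\infty(d) \leq 0$ $\fP$-q.s., this forces $\Psi^\infty(d) = 0$ $P$-a.s.\ for every $P \in \fP$, i.e.\ $d \in \cK$. Under the hypothesis~\eqref{eq:def-NA-1-Per}, $\cK$ is linear, the recession cone of $\phi$ is therefore a linear subspace, and \cite[Theorem~9.2, p.75]{Rockafellar.97} produces the desired maximizer $\widehat h$ of the proper closed concave function $\phi$.

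The main technical obstacle is the converse inclusion in the recession-cone identification: one has to pass from an expectation-level statement, valid only for every $P \in \fP$ separately, to a $\fP$-quasi-sure pointwise conclusion on $\Psi^\infty(d)$. This is possible only because of the sign condition $\Psi^\infty \leq 0$ $\fP$-q.s., which is how the upper bound in Assumption~\ref{ass:1-Per}(2) ultimately enters the argument and turns a single-$P$ expectation inequality into a pointwise equality $\Psi^\infty(d) = 0$.
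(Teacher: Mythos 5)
Your proposal is correct and follows essentially the same route as the paper: both apply Rockafellar's Theorem~9.2 to the value function $h\mapsto\inf_{P\in\fP}E^P[\Psi(h)]$ after identifying its recession cone with $\cK$ via the uniform upper bound and monotone convergence. The only cosmetic difference is that the paper packages the interchange as the exact identity $\Phi^\infty(h)=\inf_{P\in\fP}E^P[\Psi^\infty(h)]$ (its Lemma~\ref{le:psi-phi-infty}), whereas you prove only the two set inclusions needed for the sign of $\phi^\infty$.
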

\subsection{Proof of the One-Period Optimization Problem}\label{subsec:proof-1-Per}

In the one-step case, one can  
prove the existence of a maximizer just by applying classical arguments from convex analysis. Assumption~\ref{ass:1-Per} is in force throughout this section. One of the key elements of the proof is the following reformulation of \cite[Theorem~9.2, p.75]{Rockafellar.97}.
\begin{proposition}\label{prop:thm9.2R}
Let $f:\R^n\rightarrow\R\cup\{-\infty\}$ be concave, upper-semicontinuous and proper, and let $A:\R^n\rightarrow\R^m$ be linear. If the set $\{z \in \R^n \,|\, Az=0,\, f^\infty(z)\geq0 \}$ is linear, then the function
\begin{equation*}
g(y):=\sup\{f(x)\,|\,x\in\R^n,\,Ax=y\}
\end{equation*}
is concave, proper and upper-semicontinuous. Moreover,
\begin{equation*}
g^\infty(y)=\sup\{f^\infty(x)\,|\,x\in\R^n,\,Ax=y\}.
\end{equation*}
Furthermore, for each $y$ such that $g(y)> -\infty$, the supremum in the definition of $g$ is attained.
\end{proposition}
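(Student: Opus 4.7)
The statement is the concave/supremum translation of Theorem~9.2 in \cite{Rockafellar.97}, which is stated in Rockafellar's book for convex/proper/lower-semicontinuous functions and infima. My plan is to execute this translation carefully and then invoke the cited theorem as a black box.

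\textbf{Step 1 (convexification).} Set $\tilde f := -f$, so that $\tilde f:\R^n\to\R\cup\{+\infty\}$ is proper, convex and lower-semicontinuous. Using the definition of the horizon function recalled in Section~\ref{sec:Multi-Per} together with the analogous Rockafellar definition of the recession function $\tilde f^{0+}$, a direct computation from the defining limits gives
\begin{equation*}
\tilde f^{0+}(z) \;=\; -\, f^\infty(z), \qquad z\in\R^n,
\end{equation*}
because for concave $f$ the ratios $\tfrac{1}{t}[f(x+tz)-f(x)]$ are nonincreasing in $t>0$ while for convex $\tilde f$ they are nondecreasing. In particular the cone
$\{z\in\R^n \,|\, Az=0,\ f^\infty(z)\geq 0\}$ coincides with $\{z\in\R^n \,|\, Az=0,\ \tilde f^{0+}(z)\leq 0\}$, so the linearity hypothesis of the proposition is exactly the linearity hypothesis of Rockafellar's Theorem~9.2 applied to $\tilde f$ and $A$.

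\textbf{Step 2 (apply Rockafellar).} Set $\tilde g(y):=\inf\{\tilde f(x)\,|\,Ax=y\}$. Rockafellar's Theorem~9.2 then yields that $\tilde g$ is convex, proper and lower-semicontinuous on $\R^m$, that
\begin{equation*}
\tilde g^{0+}(y) \;=\; \inf\{\tilde f^{0+}(x)\,|\,Ax=y\},
\end{equation*}
and that whenever $\tilde g(y)$ is finite the infimum in its definition is attained. (This is the step that does the real work: the linearity condition is precisely what prevents the infimum from being lost at infinity and what guarantees the hypograph is closed.)

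\textbf{Step 3 (translate back).} Since $g(y)=-\tilde g(y)$ and the supremum in the definition of $g$ corresponds to the infimum in the definition of $\tilde g$ via $x\mapsto x$, each conclusion transfers: $g$ is concave, proper and upper-semicontinuous; the recession identity for $\tilde g$ together with $\tilde f^{0+}=-f^\infty$ yields $g^\infty(y)=\sup\{f^\infty(x)\,|\,Ax=y\}$; and attainment of the infimum whenever $\tilde g(y)>-\infty$ translates to attainment of the supremum whenever $g(y)>-\infty$. The only real obstacle is the sign/duality bookkeeping of Step~1; once that is in place, the proposition is a direct consequence of \cite[Theorem~9.2]{Rockafellar.97}.
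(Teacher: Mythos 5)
Your proof is correct and takes essentially the same route as the paper, which states this proposition without further proof as a direct reformulation of \cite[Theorem~9.2, p.75]{Rockafellar.97}; your Steps 1--3 merely make the sign translation and the identity $\tilde f^{0+}=-f^\infty$ explicit, which is exactly what the paper leaves implicit. The only cosmetic slip is in Step 3: Rockafellar's attainment holds whenever $\tilde g(y)<+\infty$ (properness of $\tilde g$ makes ``$\tilde g(y)>-\infty$'' vacuous), but this translates to the correct condition $g(y)>-\infty$ in any case.
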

%
%
Consider the function  
\begin{equation*}
\Phi\colon h\mapsto\inf_{P\in\fP} E^P[\Psi(h)].
\end{equation*}
Observe that  $\Phi$ is concave and upper-semicontinuous as an infimum of concave upper-semicontinuous functions; use Assumption~\ref{ass:1-Per}(1,2) and Fatou's lemma. It is also proper, i.e. not identically equal to $-\infty$, by Assumption~\ref{ass:1-Per}(4). Moreover, we have the following.
\begin{lemma}\label{le:psi-phi-infty}
Let $\Psi:\Omega\times \R^{d} \to \R\cup\{-\infty\}$ satisfy Assumption~\ref{ass:1-Per}. Then
\begin{equation*}
\Phi^\infty(h)= \inf_{P \in \fP} E^P[\Psi^\infty(h)]
\end{equation*}
for all $h\in\R^d$.
\end{lemma}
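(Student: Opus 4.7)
The plan is to relate $\Phi^\infty$ to the single-measure envelopes $\Psi_P(h) := E^P[\Psi(h)]$ for $P \in \fP$. Each $\Psi_P$ is concave (by concavity of $\Psi(\omega,\cdot)$) and upper-semicontinuous (reverse Fatou, justified by the uniform upper bound $\Psi \leq C$). By Assumption~\ref{ass:1-Per}(4) there is $h^\circ \in \R^d$ with $\Phi(h^\circ) > -\infty$, which forces $\Psi_P(h^\circ) \geq \Phi(h^\circ) > -\infty$ for every $P \in \fP$, and in particular $\Psi(\cdot, h^\circ) > -\infty$ $P$-a.s. Thus $\Phi$ and every $\Psi_P$ are concave, upper-semicontinuous, proper, and their horizon functions can be computed from the common finite base point $h^\circ$.

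The key preliminary identity is $\Psi_P^\infty(h) = E^P[\Psi^\infty(h)]$ for every $P$ and every $h$. I would derive this from the monotone formulation of the horizon function: by concavity of $\Psi(\omega,\cdot)$ the random variables
\[
a_n(\omega) := \tfrac1n\bigl[\Psi(\omega, h^\circ + nh) - \Psi(\omega, h^\circ)\bigr]
\]
are decreasing in $n$ with pointwise limit $\Psi^\infty(\omega, h)$. The bound $a_n \leq a_1 \leq C - \Psi(\omega, h^\circ)$ shows that $(a_1)^+$ is $P$-integrable, so the monotone convergence theorem for decreasing sequences gives $E^P[a_n] \downarrow E^P[\Psi^\infty(h)]$; the left-hand side equals $\tfrac1n[\Psi_P(h^\circ + nh) - \Psi_P(h^\circ)]$, whose limit is by definition $\Psi_P^\infty(h)$.

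Given this identity, the upper bound is immediate. From $\Phi \leq \Psi_P$ pointwise, both sides finite at $h^\circ$, the formula $f^\infty(h) = \lim_n f(h^\circ + nh)/n$ yields $\Phi^\infty \leq \Psi_P^\infty = E^P[\Psi^\infty(\cdot)]$, and taking the infimum over $P$ gives $\Phi^\infty(h) \leq \inf_{P\in\fP} E^P[\Psi^\infty(h)]$. For the reverse inequality, I expand the difference quotient for each $n$: since $\Phi(h^\circ)$ is a finite constant and the infimum commutes with subtraction of a constant,
\[
\tfrac1n\bigl[\Phi(h^\circ + nh) - \Phi(h^\circ)\bigr] = \inf_{P\in\fP} \Bigl( E^P[a_n] + \tfrac1n\bigl[E^P[\Psi(h^\circ)] - \Phi(h^\circ)\bigr] \Bigr).
\]
Inside the infimum, $E^P[a_n] \geq E^P[\Psi^\infty(h)] \geq \inf_{Q\in\fP} E^Q[\Psi^\infty(h)]$ by the decreasing convergence, and $E^P[\Psi(h^\circ)] - \Phi(h^\circ) \geq 0$ by definition of $\Phi$. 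Hence the right-hand side is bounded below by $\inf_P E^P[\Psi^\infty(h)]$; the infimum over $n$ then gives $\Phi^\infty(h) \geq \inf_P E^P[\Psi^\infty(h)]$, matching the upper bound.

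The only delicate point is the applicability of monotone convergence in the second paragraph, which requires integrability of $(a_1)^+$ uniformly over $P \in \fP$; this is precisely where the uniform bound $\Psi \leq C$ from Assumption~\ref{ass:1-Per}(2) and the existence of a common finite base point $h^\circ$ from Assumption~\ref{ass:1-Per}(4) combine. No measurable selection is needed at this one-period stage, since the argument is entirely pointwise in $h$.
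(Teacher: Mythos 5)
Your proposal is correct and follows essentially the same route as the paper: both rest on the decreasing difference quotients $a_n$, monotone convergence justified by the uniform upper bound $C$ and the finiteness of $\Phi(h^\circ)$ (giving $E^P[a_n]\downarrow E^P[\Psi^\infty(h)]$), and the same additive decomposition of $\inf_P$ around the base point $h^\circ$ for the lower bound. Your upper bound via $\Phi\le\Psi_P$ and monotonicity of horizon functions is a minor cosmetic variant of the paper's estimate $\inf_P(A_P+B_P)\le\inf_P A_P + C$, not a different argument.
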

\begin{proof}
By concavity of $\Psi(\omega,\cdot)$, the sequence  $n\mapsto\frac1n\big(\Psi(\omega,nh+h^\circ)-\Psi(\omega,h^\circ)\big)$ is pointwise decreasing for every $\omega$, hence monotone convergence yields
\begin{align*}
\Phi^\infty(h)
&= \inf_{n \in \N} \frac{1}{n}
\Big(
\inf_{P\in \fP} E^P[\Psi(nh+h^\circ)]-\inf_{P\in \fP} E^P[\Psi(h^\circ)]
\Big)\\
&=\inf_{n \in \N} \frac{1}{n}
\Big(
\inf_{P\in \fP} \big(E^P[\Psi(nh+h^\circ)-\Psi(h^\circ)]+E^P[\Psi(h^\circ)]\big)-\inf_{P\in \fP} E^P[\Psi(h^\circ)]
\Big)\\
&\geq\inf_{n \in \N} \frac{1}{n}
\inf_{P\in \fP} \big(E^P[\Psi(nh+h^\circ)-\Psi(h^\circ)]\big)\\
&= 
\inf_{P\in \fP} \lim\limits_{n \to \infty} \frac{1}{n} \big(E^P[\Psi(nh+h^\circ)-\Psi(h^\circ)]\big)\\
&= \inf_{P \in \fP} E^P[\Psi^\infty(h)]
\end{align*}
For the reverse inequality, we use  Assumption~\ref{ass:1-Per}(2) and come back to the second line of the above calculation to get
\begin{align*}
\Phi^\infty(h)
&=\inf_{n \in \N} \frac{1}{n}
\Big(
\inf_{P\in \fP} \big(E^P[\Psi(nh+h^\circ)-\Psi(h^\circ)]+E^P[\Psi(h^\circ)]\big)-\inf_{P\in \fP} E^P[\Psi(h^\circ)]
\Big)\\
&\leq\inf_{n \in \N} \frac{1}{n}\Big(
\inf_{P\in \fP} \big(E^P[\Psi(nh+h^\circ)-\Psi(h^\circ)]\big) + C-\inf_{P\in \fP} E^P[\Psi(h^\circ)] \Big)\\
&= \inf_{P \in \fP} E^P[\Psi^\infty(h)].
\end{align*}
\end{proof}

\begin{proof}[Proof of Theorem~\ref{thm:1-Per}]
The mapping $\Psi(\omega,\cdot)$ is bounded from above by a constant $C$ for each $\omega\in\Omega$. Thus $\Psi^\infty(\omega,\cdot)\leq 0$ for each $\omega$. From Lemma~\ref{le:psi-phi-infty} we know that $\Phi^\infty(h)= \inf_{P \in \fP} E^P[\Psi^\infty(h)]$ and thus also $\Phi^\infty(h)\leq0$ for all $h\in\R^d$.

Observe  that $\Phi^\infty(h) = 0$ if and only if $h\in\cK$. Therefore, by the no-arbitrage condition \eqref{eq:def-NA-1-Per}, we see that the conditions of Proposition~\ref{prop:thm9.2R} are fulfilled for the linear map $A:\R^d \to \{0\}, \ x \mapsto 0$ and $f\equiv \Phi$. Thus,  Proposition~\ref{prop:thm9.2R} yields  the existence of a maximizer $\widehat h$, as (by choosing $y\equiv0$)
\begin{align*}
\sup_{h \in \R^d} \Phi(h)
= 
\sup\{\Phi(h)\,|\, Ah=0\}
=
\sup_{h \in \R^d} \inf_{P \in \fP} E^P[\Psi(h)]
\geq\inf_{P \in \fP} E^P[\Psi(h^\circ)]
>-\infty.
\end{align*}
\end{proof}

\begin{remark}\label{rem:proof-thm-1-Per}
Notice that in the above proof we did not require any structural properties of the measurable space $(\Omega,\cF)$ or of the set $\fP$ of probability measures. The only important element of the setup is concavity of the map $\Psi$. In particular, Theorem~\ref{thm:1-Per} remains valid for any measurable space $(\Omega, \cF)$ when replacing Assumption~\ref{ass:1-Per}(3) by the assumption  that the map $(\omega,x) \mapsto \Psi(\omega,x)$ is $\cF \otimes \cB(\R^d)$-measurable.
\end{remark}
%

%
\section{Multi-Period-Model}\label{sec:proof-multi-period}
The key idea of the multi-period case is to adapt the techniques of dynamic programming principle, developed in~\cite{Evstigneev.76}, to the robust framework. We will use the setup and measurability techniques developed in \cite{BouchardNutz.13} and \cite{Nutz.13util}.

From this point on, Assumption~\ref{ass:Psi-MultiP} is in force as well as the no-arbitrage condition $\NA(\fP)$ given by
\begin{equation*}
\mbox{the set }
\cK:=\{H \in \cH \, | \, \Psi^\infty(H)\geq 0 \ \fP\mbox{-q.s.}\} \, \mbox{ is linear.}
\end{equation*} %

The basic idea of dynamic programming is to reduce the maximization over the set of strategies $\cH$ to one-step maximization we encountered in the previous section. Precisely, using the notation where for $\omega^t \in \Omega^t$  and $\tilde{\omega}\in \Omega_1$, $\omega^t\otimes_t\tilde{\omega}$ stands for the pair $(\omega^t,\tilde{\omega})\in \Omega^{t+1}$, we define the following sequences of maps: set $\Psi_{T}:=\Psi$ and for $t=T-1,\dots,0$ and $\omega^t \in \Omega^t$ define
\begin{align*}
\Phi_t(\omega^t,x^{t+1})
&:=\inf_{P \in \fP_t(\omega^t)}E^P[\Psi_{t+1}(\omega^t\otimes_t\,\cdot,x^{t+1})],\\
\widetilde \Psi_t(\omega^t,x^{t})
&:=\sup_{\tilde{x}\in \R^d} \Phi_t(\omega^t,x^{t},\tilde{x}),\\
\Psi_t(\omega^t,\cdot)
&:=  \cl \widetilde \Psi_t(\omega^t, \cdot),
\end{align*}
where by $\cl \widetilde \Psi_t(\omega^t, \cdot)$ we denoted the upper-semicontinuous hull of the function $x\mapsto\widetilde \Psi_t(\omega^t,x)$. The upper-semicontinuous hull of a function $f\colon\R^n\to [-\infty,\infty]$ is the smallest upper-semicontinuous function (not necessarily finite) minorized by $f$; it is the function whose hypograph is the closure in $\R^n \times \R$ of the hypograph of $f$, see \cite[p.52]{Rockafellar.97}. 

\begin{remark}
Let us give a rough sketch of an argument why one would consider the recursion above. We will restrict our attention to the case with $T=2$. The optimization problem we are considering is
$$
\sup_{H_0,H_1}\inf_{P\in\fP}E^P[\Psi_2(H_0,H_1)]
=
\sup_{H_0}\bigg[\sup_{H_1}\inf_{P\in\fP}E^P[\Psi_2(H_0,H_1)]\bigg].
$$
After we evaluated the expression in the brackets, optimization over $H_0$ will follow the argument we provided for the one-step case. Let us, therefore, concentrate on the optimization over $H_1$. First, calculate the conditional expectation of $\Psi_2(H_0,H_1)$ given $\cF_1$. By the definition of the probability measure as $P = P_0\otimes P_1$, we know that 
\begin{align*}
E^{P_0\otimes P_1}[\Psi_2(H_0,H_1)|\cF_1](\omega_1)
&=
\int_{\Omega_2}\Psi_2(\omega_1\otimes\omega_2,H_0,H_1)P_1(\omega_1,d\omega_2)
\\&=
E^{P_1(\omega_1)}[\Psi_2(\omega_1\otimes_1\cdot,H_0,H_1(\omega_1))],
\end{align*}
where the second equality is just change in notation. Hence, $\Phi_1$ is just
$$
\Phi_1(\omega_1,x_0,x_1)= \Phi_1(\omega_1,x^2)
=
\inf_{P_1 \in \fP_1(\omega_1)}E^{P_1}[\Psi_2(x^{2})|\cF_1](\omega_1),
$$
where the versions of the conditional expectation are defined via kernels $P_1\in\fP_1(\omega_1)$. Coming back to the minimization over $H_1$, first use the tower property: for any strategy $H\in\cH$ we have 
$$
\inf_{P \in \fP}E^P[\Psi_2(H_0,H_1)]
=
\inf_{P_0 \in \fP_0}E^{P_0}\bigg[\inf_{P_1 \in \fP_1(\omega_1)}E^{P_1}[\Psi_2(\omega_1\otimes_1\,\cdot,H_0,H_1)]\bigg]
$$
by the decomposability property. However, one would need to show that the expression in the expectation is measurable. In particular, the above equation implies that
\begin{equation*}
\sup_{H_1}\inf_{P\in\fP}E^P[\Psi_2(H_0,H_1)] =
\sup_{H_1} \inf_{P \in \fP}E^{P}[\Phi_1(\cdot,H_0,H_1)].
\end{equation*}

It remains to show that
$$
\sup_{H_1}\inf_{P \in \fP}E^{P}[\Phi_1(\cdot,H_0,H_1)]
=
\inf_{P_0 \in \fP_0}E^{P_0}[\Psi_1(\cdot,H_0)].
$$
The inequality $\leq$ is easy to see. Indeed, we have $\Phi_1(\omega,H_0,H_1)\leq \Psi_1(\omega,H_0)$ for each $\omega$, $H_0$. To see the converse, it is enough to prove that the supremum in the definition of $\Psi_1$ is attained by some $\cF_1$-measurable random variable. 

This formally derives the recursion we are considering for the  maximization problem.
\end{remark}

Of course, the remark above does not constitute a proof to our main theorem, but merely a sketch of an argument of why the recursion we are observing makes sense. Indeed, the main part of the argument will be proving measurability and attainment of maximizers in the one-step case.

\begin{remark}\label{rem:on assumption 2.1}
Before going to the more difficult part of the proof, let us make a few simple observations about the recursion defined above. The numbering refers to the numbering in the Assumption~\ref{ass:Psi-MultiP}.
\begin{enumerate}
\item[{\bf(1a)}]
It follows directly from the definitions that the mappings $\Phi_t$ and $\Psi_t$ are concave. Indeed, we assume that $\Psi_T=\Psi$ is concave, and \cite[Proposition~2.9(b), p.43]{RockafellarWets} implies that $\Phi_t$ is concave as soon as $\Psi_{t+1}$ is. Moreover, \cite[Proposition~2.22(a), p.51]{RockafellarWets} implies that $\widetilde \Psi_t$ 
and hence also $\Psi_t$ is concave as soon as $\Phi_t$ is. This holds for each $\omega^t\in\Omega^t$.
\item[{\bf(1b)}]
For each $\omega^t$, the map $\Phi_t$ is upper-semicontinuous in the second variable as long as $\Psi_{t+1}$ is. The argument, which uses Fatou lemma and the uniform upper bound, is given in the one-step case.
Moreover, upper-semicontinuity of $\Psi_t$ follows directly from its definition as an upper-semicontinuous hull of the proper  concave function $\widetilde{\Psi}_t$. In Proposition~\ref{prop:Psi-MultiP-OneP}(i), we will show that also the function  $\widetilde{\Psi}_t$ is upper-semicontinuous for $\fP$-q.e. $\omega^t$. This will then imply that $\widetilde\Psi_t$ coincides $\fP$-q.s. with $\Psi_t$. The proof will be based on Proposition~\ref{prop:thm9.2R} and a local no-arbitrage condition which we will introduce in Definition~\ref{def:NA-t}.
\item[{\bf(2)}] 
Boundedness from above by a constant $C\in\R$ is obvious: take the same constant $C$ as in Assumption~\ref{ass:Psi-MultiP}(2) for the mapping $\Psi$. 
\item[{\bf(3)}]
We will prove in Lemma~\ref{le:ass-psi-phi-multiP} that $\Phi_t$ is lower semianalytic if $\Psi_{t+1}$ is. Semianalyticity of $\Psi_t$ is based on a density argument; see Lemma~\ref{le:dense-MultiP}. The problem arises as the supremum of lower semianalytic functions need not be lower semianalytic. Passing to a supremum of a countable number of functions, i.e. maximizing over $\Q^d$ instead of $\R^d$ in the definition of  $\widetilde\Psi_t$, preserves this measurability property. This is the reason for condition~(4) in Assumption~\ref{ass:Psi-MultiP}.
\item[{\bf(4)}]
It is easy to see that the condition~\eqref{eq:interior} is satisfied for the mapping $\Phi_t$ as long as it is satisfied for the mapping $\Psi_{t+1}$
To see that the mapping $\widetilde\Psi_t$ needs to satisfy~\eqref{eq:interior} when the condition is satisfied by $\Phi_t$, it is enough to notice that $\widetilde\Psi_t(\omega^t,x^t)\geq \Phi_t(\omega^t,x^t,h^{\circ}_{t})$ for each $(\omega^t,x^t)\in\Omega^t\times\R^{dt}$, where $h^{\circ}_{t}$ is the $t+1$-entry of the vector $h^{\circ}=(h^{\circ}_0,\dots,h^{\circ}_{T-1})$ defined in Assumption~\ref{ass:Psi-MultiP}(4). Then, as $\Psi_t\geq \widetilde\Psi_t$ by definition of the upper-semicontinuous hull,  $\Psi_t$  satisfies~\eqref{eq:interior}.
\end{enumerate}
\end{remark}

\begin{remark}
We just argued in Remark~\ref{rem:on assumption 2.1} that the mapping $x^t\mapsto\widetilde \Psi_t(\omega^t,x^t)$ is concave for each $t$ and all $\omega^t\in\Omega^t$ and  that $\Psi_t(\omega^t,h^{\circ,t})>-c$ for all $\omega^t$, where $h^{\circ,t}\in \R^{dt}$ denotes the restriction of $h^\circ$ defined in Assumption~\ref{ass:Psi-MultiP}(4) to the first $t$ entries. Therefore, \cite[Theorem~2.35, p.59]{RockafellarWets} provides the identity 
$$
\Psi_t(\omega^t,x^t) = \lim_{\lambda\nearrow 1}\widetilde \Psi_t(\omega^t, \lambda x^t+(1-\lambda)h^{\circ,t}).
$$
\end{remark}

We first need to show that if the mapping $\Psi$ satisfies the Assumption~\ref{ass:Psi-MultiP} then also all of the mappings $\Psi_t$ do for each $t =0,\dots,T-1$. This has to be deduced from the dynamic programming recursion. For that we first need some more terminology and notation. For each $t \in \{1,\dots,T\}$, denote by $\cH^t$ the set of all $\F$-adapted, $\R^d$-valued processes $H^t:=(H_0,\dots,H_{t-1})$; these are just restrictions of strategies in $\cH$ to the first $t$ time steps. Define the no-arbitrage condition up to time $t$, denoted by $\mbox{NA}(\fP)^t$, for the mappings $(\Psi_t)$ in the natural way, by saying
\begin{align*}
\mbox{the set } \ \cK^t:=\{H^t\in \cH^t \, | \, \Psi^\infty_t(H^t)\geq 0 \ \fP\mbox{-q.s.} \} \ \mbox{is linear.}
\end{align*}
Condition $\mbox{NA}(\fP)^t$ is a statement about a set of strategies and as such cannot yet be used to prove things that we need it for. What we need is a local version of the no-arbitrage condition. 
\begin{definition}\label{def:NA-t}
For each $t \in \{0,\dots,T-1\}$ and $\omega^t \in \Omega^t$ define a set 
\begin{equation}\label{eq:def-K-t-local}
K_t(\omega^t):=\{h \in \R^{d} \,|\, \Psi^\infty_{t+1}(\omega^t\otimes_t\cdot,0,\dots,0,h)\geq 0 \ \  \fP_{t}(\omega^t)\mbox{-q.s.}\, \}.
\end{equation}
We say that condition $\mbox{NA}_t$ holds if
\begin{equation*}
\mbox{the set } \,\big\{\omega^t \in \Omega^t \, \big|\, K_t(\omega^t) \mbox{ is linear} \big\} \  \mbox{ has }\fP\mbox{-full measure.}
\end{equation*}
\end{definition}
To make a simple observation: let $\widehat h$ be a universally measurable selection of $K_t$. Then the strategy $(0,\ldots,0,\widehat h)\in\cK^{t+1}$, where there are $t$ zeros in the previous expression. Thus, it is clear that linearity of $K_t$ is necessary for the no-arbitrage condition $\mbox{NA}(\fP)^{t+1}$. 
\begin{proposition}\label{prop:local-NA}
Let $t\in\{0,\ldots,T-1\}$. Assume that $\Psi_{t+1}$ satisfies Assumption~\ref{ass:Psi-MultiP}. If $\cK^{t+1}$ is linear, 
then the local no-arbitrage condition $\NA_t$ holds.
\end{proposition}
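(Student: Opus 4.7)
The plan is to argue by contradiction. Suppose
\[
N := \bigl\{\omega^t \in \Omega^t : K_t(\omega^t) \text{ is not linear}\bigr\}
\]
is not $\fP$-polar. I will construct an $\F$-adapted process $\widehat{H} = (0,\dots,0,\widehat{h}) \in \cH^{t+1}$, with $\widehat{h}$ an $\cF_t$-measurable $\R^d$-valued random variable, such that $\widehat{H} \in \cK^{t+1}$ but $-\widehat{H} \notin \cK^{t+1}$, contradicting linearity of $\cK^{t+1}$.

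First I set up the measurability framework. Using the anchor $h^{\circ}$ from Assumption~\ref{ass:Psi-MultiP}(4), which guarantees $\Psi_{t+1}(\omega,h^{\circ}) > -\infty$ uniformly in $\omega$, the horizon function admits the representation
\[
\Psi_{t+1}^\infty(\omega,h) \;=\; \inf_{n\in\N} \tfrac{1}{n}\bigl[\Psi_{t+1}(\omega,h^{\circ}+nh) - \Psi_{t+1}(\omega,h^{\circ})\bigr],
\]
so $\Psi_{t+1}^\infty$ is lower semianalytic as a countable infimum of lower semianalytic maps. Combining this with the analyticity of $\mathrm{graph}(\fP_t)$ and the Bertsekas-Shreve measurability results for integration against analytic kernels (\cite[Prop.~7.47, 7.48]{BertsekasShreve.78}), the map
\[
g(\omega^t, h) \;:=\; \sup_{P \in \fP_t(\omega^t)} P\bigl[\Psi_{t+1}^\infty(\omega^t \otimes_t \cdot,\, 0,\dots,0,h) < 0\bigr]
\]
is upper semianalytic on $\Omega^t \times \R^d$, and $h \notin K_t(\omega^t) \iff g(\omega^t,h) > 0$.

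The heart of the argument is extracting an $\cF_t$-measurable witness of non-linearity. Since $\Psi_{t+1}^\infty(\omega,\cdot) \leq 0$ with $\Psi_{t+1}^\infty(\omega,0) = 0$ (by positive homogeneity and concavity), $K_t(\omega^t)$ is a closed convex cone containing the origin, and $K_t(\omega^t) \setminus (-K_t(\omega^t))$ is, whenever nonempty, relatively open in $K_t(\omega^t)$. The graph $\{(\omega^t,h) : h \in K_t(\omega^t)\} = \{g \leq 0\}$ is coanalytic, hence universally measurable by Lusin's theorem, so $K_t$ is a closed-valued $\cF_t$-measurable random cone and admits a Castaing-type representation: a sequence $(f_n)_{n\in\N}$ of $\cF_t$-measurable selections with $\{f_n(\omega^t)\}_n$ dense in $K_t(\omega^t)$ for every $\omega^t$. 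On $N$, the relatively open subset $K_t(\omega^t)\setminus(-K_t(\omega^t))$ must meet the dense family $\{f_n(\omega^t)\}$, so $N = \bigcup_n N_n$ with $N_n := \{\omega^t : g(\omega^t,-f_n(\omega^t)) > 0\}$ universally measurable. Non-polarity of $N$ forces some $N_{n_0}$ to be non-polar, and I set $\widehat{h} := f_{n_0}\,\mathbf{1}_{N_{n_0}}$.

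To conclude, $\widehat{H} := (0,\dots,0,\widehat{h}) \in \cH^{t+1}$ satisfies $\Psi_{t+1}^\infty(\cdot,0,\dots,0,\widehat{h}) \geq 0$ $\fP$-q.s., because $\widehat{h}(\omega^t) \in K_t(\omega^t)$ for every $\omega^t$; hence $\widehat{H} \in \cK^{t+1}$. On $N_{n_0}$, however, $g(\omega^t, -\widehat{h}(\omega^t)) > 0$, so a Jankov-von Neumann selection applied to the analytic set
\[
\bigl\{(\omega^t, P) : P \in \fP_t(\omega^t),\; P[\Psi_{t+1}^\infty(\omega^t\otimes_t\cdot,0,\dots,0,-\widehat{h}(\omega^t)) < 0] > 0\bigr\}
\]
yields a universally measurable kernel $\widetilde{P}_t(\omega^t) \in \fP_t(\omega^t)$ witnessing strict positivity on $N_{n_0}$. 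Combining $\widetilde{P}_t$ with kernels drawn from any $P = P_0 \otimes \cdots \otimes P_{T-1} \in \fP$ achieving $P[N_{n_0}] > 0$ (for $s \neq t$) produces $P^* \in \fP$ under which $\Psi_{t+1}^\infty(\cdot,0,\dots,0,-\widehat{h}) < 0$ with positive probability, so $-\widehat{H} \notin \cK^{t+1}$, contradicting linearity. The main obstacle is the measurable selection in the third paragraph: because the graph $\{h \in K_t\}$ is only coanalytic rather than Borel or analytic, the classical Jankov-von Neumann theorem does not apply directly, and one must carefully invoke Lusin's universal measurability of coanalytic sets (together with the fact that $K_t(\omega^t) \setminus (-K_t(\omega^t))$ is relatively open in $K_t(\omega^t)$) to justify the Castaing-type extraction of $\widehat{h}$.
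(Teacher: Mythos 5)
Your overall strategy (argue by contradiction, extract a measurable selection $\widehat h$ of $K_t$ witnessing non-linearity, form $H^{t+1}=(0,\dots,0,\widehat h)$, then select a kernel showing $-H^{t+1}\notin\cK^{t+1}$) is the same as the paper's, but the key measurability step has a genuine gap, and it sits exactly where you flag "the main obstacle." Lusin's theorem gives universal measurability of the \emph{graph} $\{(\omega^t,h)\,:\,h\in K_t(\omega^t)\}=\{g\le 0\}$, but that does not make $K_t$ an $\cF_t$-measurable correspondence: measurability of the correspondence requires $K_t^{-1}(O)=\proj\big(\{g\le 0\}\cap(\Omega^t\times O)\big)\in\cF_t$ for open $O$, and this is the projection of a \emph{coanalytic} set, i.e.\ a $\Sigma^1_2$ set, which is not provably universally measurable in ZFC. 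Consequently the Castaing-type representation you rely on is not obtained by the argument you give. The paper avoids this entirely in Lemma~\ref{le:K-t-local-nice}: it writes $K_t(\omega^t)=\{x\,|\,\Phi_t^\infty(\omega^t,0,\dots,0,x)\ge 0\}$ and shows $\Phi_t^\infty$ is a concave $\cF_t$-normal integrand --- normality coming from $\cF_t$-measurability in $\omega$ for each \emph{fixed} $x$ combined with concavity, upper semicontinuity and the interior-domain condition of Assumption~\ref{ass:Psi-MultiP}(4), via \cite[Proposition~14.39]{RockafellarWets} --- whence $K_t$ is a measurable closed-valued correspondence by \cite[Propositions~14.33 and 14.45]{RockafellarWets} and \cite[Theorem~14.5]{RockafellarWets} yields the Castaing representation. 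This is precisely where Assumption~\ref{ass:Psi-MultiP}(4) earns its keep, and your argument never invokes it.

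There is a second, smaller gap in your last paragraph: you apply Jankov--von Neumann to the set $\big\{(\omega^t,P)\,:\,P\in\fP_t(\omega^t),\ P[\Psi^\infty_{t+1}(\omega^t\otimes_t\cdot,0,\dots,0,-\widehat h(\omega^t))<0]>0\big\}$, but since $\widehat h$ is only universally measurable, substituting it into the lower semianalytic $\Psi^\infty_{t+1}$ produces a set that is merely universally measurable, not analytic, so Jankov--von Neumann does not apply as stated. Step~1 of the paper's proof exists to repair exactly this: the Castaing selections are modified on a $P^t$-nullset to Borel versions via \cite[Lemma~7.27]{BertsekasShreve.78}, giving a Borel correspondence $K_t^{P^t}$ agreeing with $K_t$ $P^t$-a.s., after which the selection set $\cS_t$ is genuinely analytic. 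Your proof needs the same modification before the final selection argument can go through.
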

Having this result at hand, one can proceed as in the one-step case. The following is a direct consequence.
\begin{proposition}\label{prop:Psi-MultiP-OneP}
Let $t \in \{0,\dots,T-1\}$. Assume that 
$\Psi_{t+1}$ satisfies Assumption~\ref{ass:Psi-MultiP} and that $\mathrm{NA}_t$ holds. Then
\begin{enumerate}
\item the map $x^t\mapsto\widetilde\Psi_t(\omega^t,x^t)$ is upper-semicontinuous $\fP$-q.s.; in particular, it coincides with $x^t\mapsto\Psi_t(\omega^t,x^t)$ for $\fP$-quasi every $\omega^t$.
\item for every $H^t \in \cH^t$ there exists an 
$\cF_t$-measurable mapping $\widehat{h}_t:\Omega^t\to\R^d$ such that
\begin{equation*}
\Phi_t(\omega^t,H^t(\omega^t),\widehat{h}_t)
=\Psi_t(\omega^t,H^t(\omega^t)) 
\qquad
\mbox{for }\fP\mbox{-q.e. }\omega^t \in \Omega^t.
\end{equation*}
\end{enumerate}
\end{proposition}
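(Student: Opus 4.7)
The strategy is to apply the one-period results of Section~\ref{sec:1-Per} fiberwise in $\omega^t$, using the local no-arbitrage set $K_t(\omega^t)$ in place of $\cK$, and then produce the $\cF_t$-measurable selector $\widehat h_t$ by a Jankov--von Neumann measurable-maximum selection. Let $E := \{\omega^t \in \Omega^t : K_t(\omega^t) \text{ is linear}\}$, which has $\fP$-full measure by $\mathrm{NA}_t$.

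\textbf{Part (i).} Fix $\omega^t \in E$ and set $f := \Phi_t(\omega^t, \cdot) : \R^{d(t+1)} \to \R \cup \{-\infty\}$. By Remark~\ref{rem:on assumption 2.1}(1a, 1b, 4), $f$ is concave, upper-semicontinuous, and proper (the properness via $f(h^{\circ, t+1}) \geq -c$, where $h^{\circ, t+1}$ denotes the restriction of $h^\circ$ to its first $t+1$ blocks). With $A : \R^{d(t+1)} \to \R^{dt}$ the coordinate projection onto the first $dt$ entries, the auxiliary function $g(x^t) := \sup\{f(z) : A z = x^t\}$ from Proposition~\ref{prop:thm9.2R} is precisely $\widetilde \Psi_t(\omega^t, x^t)$, so it suffices to identify $\{z : Az = 0,\ f^\infty(z) \geq 0\}$ with $\{0\} \times K_t(\omega^t)$ in order to invoke Proposition~\ref{prop:thm9.2R}.

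For this identification, apply Lemma~\ref{le:psi-phi-infty} to the one-period problem with $\Psi_{t+1}(\omega^t \otimes_t \cdot, \cdot)$ and $\fP_t(\omega^t)$ in place of $\Psi$ and $\fP$; this is legitimate because $\Psi_{t+1}(\omega^t \otimes_t \cdot,\cdot)$ satisfies Assumption~\ref{ass:1-Per}. This yields
$$\Phi_t^\infty(\omega^t, x^{t+1}) = \inf_{P \in \fP_t(\omega^t)} E^P\bigl[\Psi_{t+1}^\infty(\omega^t \otimes_t \cdot,\, x^{t+1})\bigr].$$
Combined with the uniform bound $\Psi_{t+1}^\infty \leq 0$ (inherited from Assumption~\ref{ass:Psi-MultiP}(2)), the inequality $\Phi_t^\infty(\omega^t, 0, \tilde x) \geq 0$ is equivalent to $\Psi_{t+1}^\infty(\omega^t \otimes_t \cdot, 0, \tilde x) \geq 0$ $\fP_t(\omega^t)$-q.s., i.e.\ to $\tilde x \in K_t(\omega^t)$. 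For $\omega^t \in E$ this set is linear, so Proposition~\ref{prop:thm9.2R} applies: $\widetilde \Psi_t(\omega^t, \cdot)$ is concave, proper and upper-semicontinuous, hence coincides with its closure $\Psi_t(\omega^t, \cdot)$, and the supremum defining $\widetilde \Psi_t(\omega^t, x^t)$ is attained whenever it is finite. This proves (i).

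\textbf{Part (ii).} By (i), the argmax set $M(\omega^t) := \{\tilde x \in \R^d : \Phi_t(\omega^t, H^t(\omega^t), \tilde x) = \widetilde \Psi_t(\omega^t, H^t(\omega^t))\}$ is nonempty for every $\omega^t \in E$, with the degenerate case $\widetilde \Psi_t(\omega^t, H^t(\omega^t)) = -\infty$ handled by any $\tilde x$ (since then $\Phi_t(\omega^t, H^t(\omega^t), \cdot) \equiv -\infty$). The map $(\omega^t, \tilde x) \mapsto \Phi_t(\omega^t, H^t(\omega^t), \tilde x)$ is lower semianalytic, as $\Phi_t$ is lower semianalytic (Lemma~\ref{le:ass-psi-phi-multiP}) and $H^t$ admits a Borel representative outside a $\fP$-polar set by the standard measurability reductions in the Bouchard--Nutz framework \cite{BouchardNutz.13}. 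Its $\tilde x$-partial supremum is lower semianalytic by \cite[Proposition~7.47]{BertsekasShreve.78}, and the measurable-maximum theorem \cite[Proposition~7.50]{BertsekasShreve.78} produces a universally measurable $\widehat h_t$ attaining it on $E$; we set $\widehat h_t := h^\circ_t$ off $E$ to make $\widehat h_t$ globally $\cF_t$-measurable. Since $\widetilde \Psi_t = \Psi_t$ on $E$ from (i), this $\widehat h_t$ realizes the required identity $\fP$-q.s.

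\textbf{Main obstacle.} The decisive step is the horizon-function identification in part (i): exchanging the monotone $n \to \infty$ limit in the definition of $\Phi_t^\infty$ with the non-dominated infimum over $\fP_t(\omega^t)$, which is exactly what Lemma~\ref{le:psi-phi-infty} delivers, and which crucially uses the uniform upper bound from Assumption~\ref{ass:Psi-MultiP}(2). Once this is in place, the rest reduces to a direct application of Proposition~\ref{prop:thm9.2R} and a standard lower-semianalytic measurable selection; the only secondary technicality is the composition-with-$H^t$ measurability, which is resolved by the $\fP$-polar Borel representative argument.
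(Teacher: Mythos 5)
Your part (i) is essentially the paper's own argument: fix $\omega^t$ in the full-measure set where $K_t(\omega^t)$ is linear, identify $\{z \,:\, Az=0,\ \Phi_t^\infty(\omega^t,z)\ge 0\}$ with $\{0\}\times K_t(\omega^t)$ via the horizon-function identity \eqref{eq:psi-phi-infty-multiP}, and invoke Proposition~\ref{prop:thm9.2R}; the conclusion that $\widetilde\Psi_t(\omega^t,\cdot)$ is upper-semicontinuous, hence equal to its closure $\Psi_t(\omega^t,\cdot)$, with attainment of the supremum, is exactly as in the paper.

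Part (ii), however, contains a genuine gap in the measurable selection. You compose the lower semianalytic $\Phi_t$ with the strategy $H^t$ and want to apply the measurable-maximum theorem \cite[Proposition~7.50]{BertsekasShreve.78}, which requires the composed map $(\omega^t,\tilde x)\mapsto \Phi_t(\omega^t,H^t(\omega^t),\tilde x)$ to be lower semianalytic. Since $H^t$ is only universally measurable, you try to repair this by replacing $H^t$ with ``a Borel representative outside a $\fP$-polar set.'' Such a representative does not exist in general in the non-dominated setting: \cite[Lemma~7.27]{BertsekasShreve.78} produces a Borel modification off a $P$-null set for one \emph{fixed} $P$, and this cannot be done uniformly over a non-dominated $\fP$ (for instance, if the $\fP_t(\omega^t)$ contain all Dirac measures, the only $\fP$-polar set is the empty set, while a universally measurable $H^t$ need not be Borel). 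A per-measure modification would yield a selector depending on $P$, which is not what the statement asserts. The paper sidesteps this entirely by staying in the normal-integrand framework for the universal $\sigma$-field: $\Phi_t$ is an $\cF_t$-normal integrand (Lemma~\ref{le:ass-psi-phi-multiP} together with Remark~\ref{rem:ass-normal-multiP}), hence so is $(\omega^t,\tilde x)\mapsto\Phi_t(\omega^t,H^t(\omega^t),\tilde x)$ by \cite[Proposition~14.45]{RockafellarWets}, and \cite[Theorem~14.37]{RockafellarWets} then yields an $\cF_t$-measurable selector of the argmax correspondence directly, with no semianalyticity of the composition required. Replacing your selection step by this argument closes the gap; the remainder of your part (ii) (nonemptiness of the argmax off a $\fP$-polar set via Proposition~\ref{prop:thm9.2R}, and $\widetilde\Psi_t=\Psi_t$ $\fP$-q.s.\ from part (i)) matches the paper.
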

%
%
%
%
The  important step toward the proof of our main result is the observation that sets $\cK^t$ behave well under the dynamic programming recursion. 
\begin{proposition}\label{prop:no-arbitrage-up-time-t-MultiP}
Let $t\in \{0,\ldots,T-1\}$ and let $\Psi_{t+1}$ satisfy Assumption~\ref{ass:Psi-MultiP}. If $\cK^{t+1}$ is linear, then so is $\cK^t$.
\end{proposition}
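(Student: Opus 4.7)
The plan is to deduce linearity of $\cK^t$ from that of $\cK^{t+1}$ by a lift-and-project argument: given a witness $H^t\in\cK^t$, I construct an $\cF_t$-measurable one-step extension $H_t$ such that $(H^t,H_t)\in\cK^{t+1}$, then invoke the assumed linearity of $\cK^{t+1}$ to negate, and finally read off from the recursion that $-H^t\in\cK^t$. Throughout I use that $\Psi_s\leq C$ at every level $s$ (Remark~\ref{rem:on assumption 2.1}(2)), hence $\Psi_s^\infty\leq 0$ pointwise; this turns every inequality $\Psi_s^\infty\geq 0$ into an equality and streamlines the quasi-sure bookkeeping.

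Fix $H^t\in\cH^t$ with $\Psi_t^\infty(H^t)\geq 0$ $\fP$-q.s. Since $\Psi_t=\cl\widetilde\Psi_t$ and the horizon function of a proper concave function is unaffected by taking the upper-semicontinuous hull, $\Psi_t^\infty=\widetilde\Psi_t^\infty$. I would then apply Proposition~\ref{prop:thm9.2R} fiberwise to the representation $\widetilde\Psi_t(\omega^t,x^t)=\sup_{\tilde x\in\R^d}\Phi_t(\omega^t,x^t,\tilde x)$: the linearity hypothesis of that proposition is exactly the local no-arbitrage condition $\NA_t$, which is available via Proposition~\ref{prop:local-NA}. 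This yields both the identity
\begin{equation*}
\widetilde\Psi_t^\infty(\omega^t,x^t)=\sup_{\tilde x\in\R^d}\Phi_t^\infty(\omega^t,x^t,\tilde x),
\end{equation*}
and attainment of the supremum whenever the left side is finite. A Jankov--von Neumann selection on the resulting analytic correspondence produces a universally measurable $H_t\colon\Omega^t\to\R^d$ such that $\Phi_t^\infty(\omega^t,H^t,H_t)\geq 0$ $\fP$-q.s. By Lemma~\ref{le:psi-phi-infty} applied to the one-step model indexed by $\omega^t$,
\begin{equation*}
\Phi_t^\infty(\omega^t,H^t,H_t)=\inf_{P\in\fP_t(\omega^t)}E^P\bigl[\Psi_{t+1}^\infty(\omega^t\otimes_t\,\cdot,\,H^t,H_t)\bigr],
\end{equation*}
and since $\Psi_{t+1}^\infty\leq 0$ the bound $\Phi_t^\infty\geq 0$ forces $\Psi_{t+1}^\infty(H^t,H_t)=0$ $\fP$-q.s. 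Hence $(H^t,H_t)\in\cK^{t+1}$.

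Now I use the hypothesis that $\cK^{t+1}$ is linear: $(-H^t,-H_t)\in\cK^{t+1}$, i.e.\ $\Psi_{t+1}^\infty(-H^t,-H_t)\geq 0$ $\fP$-q.s. Integrating fiberwise under any kernel in $\fP_t(\omega^t)$ and taking the infimum (Lemma~\ref{le:psi-phi-infty} again) gives $\Phi_t^\infty(-H^t,-H_t)\geq 0$ $\fP$-q.s., and then taking the supremum in the second coordinate produces $\widetilde\Psi_t^\infty(-H^t)\geq 0$, hence $\Psi_t^\infty(-H^t)\geq 0$ $\fP$-q.s. This exhibits $-H^t\in\cK^t$ and completes the argument.

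The main technical obstacle I anticipate is the measurable selection in the second paragraph: one must know that the correspondence $\omega^t\mapsto\{h\in\R^d\mid \Phi_t^\infty(\omega^t,H^t(\omega^t),h)\geq \widetilde\Psi_t^\infty(\omega^t,H^t)\}$ has analytic graph, i.e.\ that $\Phi_t^\infty$ (and, for the threshold, $\widetilde\Psi_t^\infty$) is lower semianalytic. This means pushing lower semianalyticity through both the horizon operation and the infimum over the analytic kernel $\fP_t(\omega^t)$, in parallel with the chain already developed for $\Psi_{t+1}$ itself in Lemmas~\ref{le:ass-psi-phi-multiP} and~\ref{le:dense-MultiP} but now applied to $\Psi_{t+1}^\infty$. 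A secondary but minor point is the identity $(\cl\widetilde\Psi_t)^\infty=\widetilde\Psi_t^\infty$; by Proposition~\ref{prop:Psi-MultiP-OneP}(i) the two functions agree $\fP$-q.s., and the general fact about closures of proper concave functions on $\R^{dt}$ handles the rest.
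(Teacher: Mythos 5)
Your argument is essentially the paper's proof run forwards instead of by contradiction: the lift $H^t\mapsto(H^t,H_t)$ via Proposition~\ref{prop:thm9.2R} applied to $\Phi_t^\infty(\omega^t,\cdot)$, the measurable selection of the maximizer, and the Fubini bookkeeping all match the paper's Steps 1--2, and the projection back down uses the same inequality $\Psi_t^\infty(x^t)\geq\Phi_t^\infty(x^t,x')$ that the paper extracts from $\Psi_t\geq\Phi_t$ and \cite[Theorem~3.21]{RockafellarWets}. The one place where you are too quick is the sentence ``Integrating fiberwise under any kernel in $\fP_t(\omega^t)$ and taking the infimum gives $\Phi_t^\infty(-H^t,-H_t)\geq 0$ $\fP$-q.s.'' The hypothesis you have at that point is that $\Psi_{t+1}^\infty(-H^{t+1})\geq 0$ holds $\bar P$-a.s.\ for every \emph{product} measure $\bar P=\bar P^t\otimes\bar P_t\otimes\cdots\in\fP$, whereas $\Phi_t^\infty(\omega^t,-H^{t+1}(\omega^t))\geq 0$ requires $E^P[\Psi_{t+1}^\infty(\omega^t\otimes_t\cdot,-H^{t+1}(\omega^t))]\geq 0$ for \emph{every} $P\in\fP_t(\omega^t)$, for $\fP$-q.e.\ $\omega^t$. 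Passing from the first statement to the second is a genuine quantifier swap: one must show that the set of $\omega^t$ admitting a ``bad'' kernel $P\in\fP_t(\omega^t)$ is $\fP$-polar, which is done by exhibiting the analytic set of bad pairs $(\omega^t,P)$ (after a Borel modification of $H^{t+1}$), selecting a universally measurable kernel from it via Jankov--von Neumann, and pasting it into a measure of $\fP$ to reach a contradiction --- this is exactly the content of the paper's Step~3 with the set $\fS_t$, and of Steps~2--3 in the proof of Proposition~\ref{prop:local-NA}. Since you already flag the analogous selection issue for the forward lift, this is a matter of supplying the same machinery in one more place rather than a flaw in the strategy; with that step filled in, the proof is complete.
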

The proof of the following Proposition 
will be done by backward induction.  
\begin{proposition}\label{prop:induction}
For any $t \in \{0,\dots,T-1\}$, the function $\Psi_{t+1}$ satisfies Assumption~\ref{ass:Psi-MultiP} and the local no-arbitrage condition $\mathrm{NA}_t$ holds.
\end{proposition}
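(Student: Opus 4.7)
The plan is to proceed by backward induction on $t$. To carry the recursion it is convenient to strengthen the statement to the joint claim that ``for each $t\in\{0,\dots,T-1\}$, the function $\Psi_{t+1}$ satisfies Assumption~\ref{ass:Psi-MultiP}, the set $\cK^{t+1}$ is linear, and the local no-arbitrage condition $\NA_t$ holds.''

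\emph{Base case} $t=T-1$. By hypothesis $\Psi_T=\Psi$ satisfies Assumption~\ref{ass:Psi-MultiP}, and the global no-arbitrage condition $\NA(\fP)$ is exactly the linearity of $\cK^T=\cK$. Proposition~\ref{prop:local-NA} applied at $t=T-1$ then delivers $\NA_{T-1}$.

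\emph{Inductive step} from $t$ to $t-1$ (for $t\geq 1$). Suppose the joint claim at level $t$. I would first verify that $\Psi_t$ satisfies Assumption~\ref{ass:Psi-MultiP}, item by item, guided by Remark~\ref{rem:on assumption 2.1}: concavity comes from Remark~\ref{rem:on assumption 2.1}(1a); upper-semicontinuity is automatic since $\Psi_t=\cl\widetilde\Psi_t$; the uniform upper bound $\Psi_t\leq C$ is inherited as in Remark~\ref{rem:on assumption 2.1}(2); and the interior condition \eqref{eq:interior} propagates from $\Psi_{t+1}$ through $\Phi_t$ and $\widetilde\Psi_t$ to $\Psi_t$ exactly as observed in Remark~\ref{rem:on assumption 2.1}(4). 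The remaining point, lower semianalyticity of $\Psi_t$, is the technical heart: Lemma~\ref{le:ass-psi-phi-multiP} gives lsa of $\Phi_t$, and Lemma~\ref{le:dense-MultiP} combined with Proposition~\ref{prop:Psi-MultiP-OneP}(i), whose hypothesis $\NA_t$ is granted by induction, transfers lsa to $\Psi_t$ (this is where Assumption~\ref{ass:Psi-MultiP}(4) is used, to reduce the supremum defining $\widetilde\Psi_t$ to a countable supremum over $\Q^d$).

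Next I would combine the linearity of $\cK^{t+1}$ (induction hypothesis) with the fact that $\Psi_{t+1}$ satisfies Assumption~\ref{ass:Psi-MultiP} to invoke Proposition~\ref{prop:no-arbitrage-up-time-t-MultiP} and conclude that $\cK^t$ is linear. Since $\Psi_t$ now satisfies Assumption~\ref{ass:Psi-MultiP} and $\cK^t$ is linear, a second application of Proposition~\ref{prop:local-NA}, this time at the index $t-1$ and to the function $\Psi_t$, yields the local condition $\NA_{t-1}$, closing the induction.

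The main obstacle is the lower-semianalyticity half of the inductive step: the upper-semicontinuous hull operation does not preserve analyticity in general, and one is rescued only by the fact that $\widetilde\Psi_t$ and $\Psi_t$ coincide $\fP$-q.s., which itself relies on $\NA_t$ via Proposition~\ref{prop:Psi-MultiP-OneP}(i). This forces the order of the two halves of the inductive step: one must first establish that $\Psi_t$ satisfies Assumption~\ref{ass:Psi-MultiP} (using the available $\NA_t$) before passing to $\cK^t$ and $\NA_{t-1}$.
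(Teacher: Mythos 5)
Your proof is correct and follows essentially the same route as the paper: a backward induction whose invariant is that $\Psi_{t+1}$ satisfies Assumption~\ref{ass:Psi-MultiP} and $\cK^{t+1}$ is linear, with Proposition~\ref{prop:local-NA} supplying $\NA_t$, Proposition~\ref{prop:no-arbitrage-up-time-t-MultiP} propagating linearity of $\cK^{t+1}$ to $\cK^t$, and Remark~\ref{rem:on assumption 2.1} together with Lemmas~\ref{le:ass-psi-phi-multiP}, \ref{le:dense-MultiP} and \ref{le:Psi-t-lsa-MultiP} propagating Assumption~\ref{ass:Psi-MultiP}. One minor inaccuracy that does not affect the argument: the lower semianalyticity of $\Psi_t$ (Lemma~\ref{le:Psi-t-lsa-MultiP}) does not go through Proposition~\ref{prop:Psi-MultiP-OneP}(i) or the $\fP$-q.s.\ coincidence of $\widetilde\Psi_t$ with $\Psi_t$; it holds for \emph{every} $\omega^t$ via the identity from Lemma~\ref{le:dense-MultiP}, which needs only Assumption~\ref{ass:Psi-MultiP}(4), so $\NA_t$ is not required for that half of the inductive step (though it is of course available).
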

The proofs of Propositions~\ref{prop:local-NA}--\ref{prop:induction} will be given in the next subsection.

%
%
\subsection{Proofs of Propositions~\ref{prop:local-NA}--\ref{prop:induction}}\label{subsec:global-local-prop-proof}
This is the technical part of this paper, and therefore is divided into several lemmas.  We first start with a  useful lemma providing the relation between $\Psi_{t+1}$ and $\Phi_t$.
\begin{lemma}\label{le:ass-psi-phi-multiP}
Let $t \in \{0,\dots,T-1\}$. If $\Psi_{t+1}$ satisfies Assumption~\ref{ass:Psi-MultiP}, then so does $\Phi_{t}$ and 
for all $(\omega^t,x^{t+1})\in \Omega^t\times \R^{d(t+1)}$ we have
\begin{equation}\label{eq:psi-phi-infty-multiP}
\Phi^\infty_t(\omega^t,x^{t+1})=\inf_{P \in \fP_t(\omega^t)}E^P[\Psi^\infty_{t+1}(\omega^t\otimes_t\,\cdot,x^{t+1})].
\end{equation}
\end{lemma}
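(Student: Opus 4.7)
The plan is to verify conditions (1)--(4) of Assumption~\ref{ass:Psi-MultiP} for $\Phi_t$ one by one, and then to imitate the one-period argument of Lemma~\ref{le:psi-phi-infty} in order to prove the horizon-function identity \eqref{eq:psi-phi-infty-multiP}. The four measurability/regularity properties follow from routine manipulations, while the main technical point is item (3), the preservation of lower semianalyticity under a parameter-dependent inf-integral.

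First, I would handle properties (1), (2), (4) quickly. For concavity of $x^{t+1}\mapsto\Phi_t(\omega^t,x^{t+1})$, observe that $x^{t+1}\mapsto\Psi_{t+1}(\omega^t\otimes_t\tilde\omega,x^{t+1})$ is concave for every $\tilde\omega$; integrating against $P\in\fP_t(\omega^t)$ and taking the infimum over $P$ preserves concavity. For upper-semicontinuity, I would apply reverse Fatou, using that $\Psi_{t+1}\leq C$ and the upper-semicontinuity of $\Psi_{t+1}(\omega,\cdot)$, to conclude that each $x^{t+1}\mapsto E^P[\Psi_{t+1}(\omega^t\otimes_t\,\cdot,x^{t+1})]$ is upper-semicontinuous; the infimum over $P$ inherits this. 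Property~(2) is immediate: $\Psi_{t+1}\leq C$ gives $\Phi_t\leq C$. For~(4), the very same vector $h^\circ$, $\varepsilon$, $c$ that work for $\Psi_{t+1}$ work verbatim for $\Phi_t$, since the lower bound $\Psi_{t+1}(\omega,x^{t+1})\geq-c$ on the $\varepsilon$-ball passes through the integral and through the infimum.

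The substantive step is~(3). The composition $(\omega^t,\tilde\omega,x^{t+1})\mapsto\Psi_{t+1}(\omega^t\otimes_t\tilde\omega,x^{t+1})$ is lower semianalytic by \cite[Lemma~7.30(3), p.177]{BertsekasShreve.78}. Integrating against a parameter measure, \cite[Proposition~7.48, p.180]{BertsekasShreve.78} shows that
\begin{equation*}
(\omega^t,P,x^{t+1})\mapsto H(\omega^t,P,x^{t+1}):=E^P\bigl[\Psi_{t+1}(\omega^t\otimes_t\,\cdot,x^{t+1})\bigr]
\end{equation*}
is lower semianalytic on $\Omega^t\times\fM_1(\Omega_1)\times\R^{d(t+1)}$. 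Finally, since $\mbox{graph}(\fP_t)$ is analytic, the set
\begin{equation*}
D:=\bigl\{\bigl((\omega^t,x^{t+1}),P\bigr)\ \big|\ (\omega^t,P)\in\mbox{graph}(\fP_t)\bigr\}\subseteq\bigl(\Omega^t\times\R^{d(t+1)}\bigr)\times\fM_1(\Omega_1)
\end{equation*}
is analytic, and \cite[Proposition~7.47, p.179]{BertsekasShreve.78} gives that $\Phi_t(\omega^t,x^{t+1})=\inf\{H(\omega^t,P,x^{t+1})\,|\,(\omega^t,P)\in\mbox{graph}(\fP_t)\}$ is lower semianalytic, as the sectional infimum of a lower semianalytic function along the fibres of an analytic set.

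It remains to establish the horizon-function identity \eqref{eq:psi-phi-infty-multiP}, and this is where I would simply transcribe the proof of Lemma~\ref{le:psi-phi-infty} to the present setting, fixing $\omega^t$ throughout. With $h^{\circ,t+1}$ denoting the restriction of $h^\circ$ from Assumption~\ref{ass:Psi-MultiP}(4) to its first $t+1$ components, concavity of $\Psi_{t+1}(\omega,\cdot)$ makes $n\mapsto\frac1n\bigl[\Psi_{t+1}(\omega,nx^{t+1}+h^{\circ,t+1})-\Psi_{t+1}(\omega,h^{\circ,t+1})\bigr]$ pointwise decreasing, so monotone convergence lets me interchange $\inf_n$ with $E^P$. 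The one-sided inequality $\Phi_t^\infty\geq\inf_P E^P[\Psi_{t+1}^\infty(\cdot)]$ follows by applying the trivial bound $\inf_P(a_P+b_P)\geq\inf_P a_P+\inf_P b_P$ to isolate the difference quotient, while the reverse bound uses Assumption~\ref{ass:Psi-MultiP}(2) to dominate $E^P[\Psi_{t+1}(h^{\circ,t+1})]\leq C$ and absorb the additive term, exactly as in Lemma~\ref{le:psi-phi-infty}. The only obstacle worth flagging is ensuring that $\Phi_t(\omega^t,h^{\circ,t+1})$ is finite (to legitimately form the difference quotient): this is exactly what Assumption~\ref{ass:Psi-MultiP}(4), already verified for $\Phi_t$ above, guarantees, so the argument closes without circularity.
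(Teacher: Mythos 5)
Your proposal is correct and follows essentially the same route as the paper: the paper dispatches conditions (1), (2) and (4) by reference to Remark~\ref{rem:on assumption 2.1}, establishes lower semianalyticity via exactly your chain of \cite[Lemma~7.30(3)]{BertsekasShreve.78}, \cite[Proposition~7.48]{BertsekasShreve.78} applied to the kernel $(P,\omega^t,x^{t+1})\mapsto P$, and the partial-infimum result \cite[Lemma~7.47]{BertsekasShreve.78} over the analytic graph of $\fP_t$, and then proves \eqref{eq:psi-phi-infty-multiP} by repeating the argument of Lemma~\ref{le:psi-phi-infty} with $\omega^t$ fixed. Your additional remark that Assumption~\ref{ass:Psi-MultiP}(4) guarantees $\Phi_t(\omega^t,h^{\circ,t+1})>-\infty$, so that the difference quotient in the horizon function is well defined, is a correct and welcome clarification of a point the paper leaves implicit.
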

\begin{proof}
Conditions (1), (2) and (4) are clear by definition and were argued in Remark~\ref{rem:on assumption 2.1}. To see that $(\omega^t,x^{t+1})\mapsto\Phi_t(\omega^t,x^{t+1})$ is lower semianalytic, we first recall  that the map $(\omega^{t+1},x^{t+1})\mapsto\Psi_{t+1}(\omega^{t+1},x^{t+1})$ is lower semianalytic by assumption. 
Also, the map
\begin{equation*}
\fM_1(\Omega)\times\Omega^t\times\Omega_1 \times \R^{d(t+1)}  \to \overline{\R}, \quad (P,\omega^t,\tilde\omega,x^{t+1})\mapsto \Psi_{t+1}(\omega^t\otimes_t\tilde\omega,x^{t+1})
\end{equation*}
is lower semianalytic as it is independent of the variable $P$. Consider the Borel measurable stochastic kernel $\kappa$ on $\Omega_1$ given $\fM_1(\Omega)\times\Omega^t\times \R^{d(t+1)}$ defined by
\begin{equation*}
\big((P,\omega^t,x^{t+1}),A\big)\mapsto 
\kappa(A\,|\,P,\omega^t,x^{t+1}):=P[A];
\end{equation*}
Borel measurability of the kernel follows from~\cite[Proposition~7.26, p.134]{BertsekasShreve.78} and~\cite[Corollary~7.29.1, p.144]{BertsekasShreve.78}. 
Then, applying~\cite[Proposition~7.48, p.180]{BertsekasShreve.78} to $\kappa$, we obtain that
\begin{equation}\label{eq:lsa-kernel}
\fM_1(\Omega)\times\Omega^t \times \R^{d(t+1)}  \to \overline{\R}, \quad (P,\omega^t,x^{t+1})\mapsto E^P[\Psi_{t+1}(\omega^t\otimes_t\,\cdot,x^{t+1})]
\end{equation}                  
is lower semianalytic. By assumption, the graph of $\fP_t$ is analytic. Therefore, we deduce from \cite[Lemma~7.47, p.179]{BertsekasShreve.78} that
\begin{equation*}
(\omega^t,x^{t+1})\mapsto \inf_{P \in \fP_t(\omega^t)}E^P[\Psi_{t+1}(\omega^t\otimes_t\,\cdot,x^{t+1})]=\Phi_t(\omega^t,x^{t+1})
\end{equation*} 
is lower semianalytic. Finally, by the same arguments as in the proof of Lemma~\ref{le:psi-phi-infty}, we see directly that \eqref{eq:psi-phi-infty-multiP} holds true.
\end{proof}
%
%
Before we can start 
with the proof of Proposition~\ref{prop:local-NA}, we need to see that the set valued map $K_t(\omega^t)$ has some desirable properties.
\begin{lemma}\label{le:K-t-local-nice}
Let $t \in \{0,\dots,T-1\}$. Assume that $\Psi_{t+1}$ satisfy Assumption~\ref{ass:Psi-MultiP}. Then the set-valued map $K_t$ defined in \eqref{eq:def-K-t-local} is a closed, convex, $\cF_t$-measurable correspondence and the set $\big\{\omega^t \in \Omega^t \, \big|\, K_t(\omega^t) \mbox{\rm{ is linear}} \big\}\in\cF_t$.
\end{lemma}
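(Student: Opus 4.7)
The proof splits into three tasks (closedness/convexity of the values of $K_t$, $\cF_t$-measurability of the correspondence $K_t$, and measurability of the linearity set), and the unifying step that makes all three manageable is the following reformulation. Assumption~\ref{ass:Psi-MultiP}(2) applied to $\Psi_{t+1}$ forces $\Psi^\infty_{t+1}(\omega^{t+1},\cdot)\le 0$ pointwise (compare Remark~\ref{rem:on assumption 2.1}(2)), so ``$\ge 0$ $\fP_t(\omega^t)$-q.s.'' is automatically ``$=0$ $\fP_t(\omega^t)$-q.s.''\ Combining this with Lemma~\ref{le:ass-psi-phi-multiP} (applied to $\Psi^\infty_{t+1}$ in place of $\Psi_{t+1}$) and the sign $\Psi^\infty_{t+1}\le 0$ yields the clean description
\begin{equation}\label{eq:K_t-as-Phi-infty}
K_t(\omega^t)=\bigl\{h\in\R^d\,\bigm|\,\Phi_t^\infty(\omega^t,0,\ldots,0,h)\ge 0\bigr\},
\end{equation}
so all three tasks can be phrased as statements about the single function $\Phi_t^\infty(\omega^t,0,\ldots,0,\cdot)$.

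For the first task, I would take $h_n\to h$ with $h_n\in K_t(\omega^t)$ and pick $\fP_t(\omega^t)$-polar sets $N_n$ witnessing the q.s.\ condition at $h_n$. Then $N:=\bigcup_n N_n$ is still $\fP_t(\omega^t)$-polar, and upper-semicontinuity of $\Psi^\infty_{t+1}(\omega^t\otimes_t\tilde\omega,0,\ldots,0,\cdot)$ (inherited from Assumption~\ref{ass:Psi-MultiP}(1) via the horizon construction) gives, for every $\tilde\omega\notin N$, $\Psi^\infty_{t+1}(\omega^t\otimes_t\tilde\omega,0,\ldots,0,h)\ge \limsup_n \Psi^\infty_{t+1}(\omega^t\otimes_t\tilde\omega,0,\ldots,0,h_n)\ge 0$. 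Convexity is analogous via concavity of $\Psi^\infty_{t+1}$ in the last variable, and positive homogeneity of $\Psi^\infty_{t+1}$ moreover makes $K_t(\omega^t)$ a closed convex cone, which is what will drive the linearity step.

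The main technical obstacle is the second task, because the family $\fP_t(\omega^t)$ is potentially uncountable and has no obvious measurable cross section; there is no direct measurable handle on the quasi-sure intersection. The description \eqref{eq:K_t-as-Phi-infty} bypasses this entirely: by Lemma~\ref{le:ass-psi-phi-multiP} and the defining formula for the horizon function (a countable infimum of lower semianalytic functions along the reference vector $h^{\circ}$), the map $(\omega^t,h)\mapsto \Phi_t^\infty(\omega^t,0,\ldots,0,h)$ is lower semianalytic, hence $\cF_t$-measurable in $\omega^t$ for each fixed $h$ by \cite[Lemma~7.29, p.174]{BertsekasShreve.78}. It is also concave and upper-semicontinuous in $h$ for each $\omega^t$. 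These are exactly the hypotheses of \cite[Proposition~14.39, p.666]{RockafellarWets}, so this map is a normal integrand in the paper's (hypograph) sense (cf.~Remark~\ref{rem:ass-normal-multiP} and Remark~\ref{rem:normal-differ}); its $0$-superlevel set, which by \eqref{eq:K_t-as-Phi-infty} is precisely $K_t$, is then a closed-valued $\cF_t$-measurable correspondence.

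For the third task, since $K_t(\omega^t)$ is a closed convex cone, ``$K_t(\omega^t)$ is linear'' is equivalent to ``$K_t(\omega^t)=-K_t(\omega^t)$''. By the measurability just established, $K_t$ admits a Castaing representation $K_t(\omega^t)=\overline{\{f_n(\omega^t):n\in\N\}}$ with $\cF_t$-measurable selectors $f_n:\Omega^t\to\R^d$. Because $K_t(\omega^t)$ is closed and $\{-f_n(\omega^t)\}_n$ is dense in $-K_t(\omega^t)$, the inclusion $-K_t(\omega^t)\subseteq K_t(\omega^t)$ holds iff $-f_n(\omega^t)\in K_t(\omega^t)$ for every $n$, i.e.\ iff $\Phi_t^\infty(\omega^t,0,\ldots,0,-f_n(\omega^t))\ge 0$ for every $n$. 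Joint $\cF_t\otimes\cB(\R^d)$-measurability of the normal integrand $(\omega^t,h)\mapsto\Phi_t^\infty(\omega^t,0,\ldots,0,h)$ makes each of these conditions $\cF_t$-measurable via composition with the measurable $-f_n$, and a countable intersection over $n\in\N$ concludes.
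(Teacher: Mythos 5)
Your proposal is correct and follows essentially the same route as the paper: rewrite $K_t(\omega^t)$ as the $0$-superlevel set of $\Phi_t^\infty(\omega^t,0,\dots,0,\cdot)$ via Lemma~\ref{le:ass-psi-phi-multiP} and the bound $\Psi_{t+1}^\infty\le 0$, deduce measurability of the correspondence from normality of $\Phi_t^\infty$, and handle the linearity set through a Castaing representation of the closed convex cone $K_t$ together with a countable intersection. The only (immaterial) divergence is that you reach normality of $\Phi_t^\infty$ by arguing it is lower semianalytic and then invoking \cite[Proposition~14.39]{RockafellarWets}, whereas the paper passes directly from normality of the concave integrand $\Phi_t$ to normality of its horizon function.
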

\begin{proof}
As $\Psi_{t+1}^\infty$ is concave, positively homogeneous and upper-semicontinuous in $x^t$, $K_t(\omega^t)$ is a closed valued convex cone for every $\omega^t$.
Observe that
\begin{align*}
K_t(\omega^t)
=
\{x_t\in\R^d\,|\,\Phi^\infty_t(\omega^t,0,\dots,0,x_t)\geq0\}.
\end{align*}
By Lemma~\ref{le:ass-psi-phi-multiP} and Remark~\ref{rem:ass-normal-multiP}, $\Phi_t$ is a concave $\cF_t$-normal integrand, hence so is $\Phi_t^\infty$.
Thus, the set valued map $K_t$ is an $\cF_t$-measurable correspondence; see  \cite[Proposition~14.33, p.663]{RockafellarWets} and \cite[Proposition~14.45(a), p.669]{RockafellarWets}. 

Finally, from $K_t(\omega^t)$ being a convex cone, we get that
\begin{equation*}
\big\{\omega^t \in \Omega^t \, \big|\, K_t(\omega^t) \mbox{\rm{ is linear}}\big\} 
=
\big\{\omega^t \in \Omega^t \, \big|\, K_t(\omega^t)=- K_t(\omega^t)\big\}.
\end{equation*}
By \cite[Theorem~14.5(a), p.646]{RockafellarWets}, $K_t$ admits a Castaing representation $\{x_n\}$. Then, we see that 
\begin{equation*}
\big\{\omega^t \in \Omega^t \, \big|\, K_t(\omega^t) \mbox{\rm{ is linear}} \big\} 
= 
\bigcap_{n\in\N}\{\omega^t \,|\, -x_n(\omega^t)\in K_t(\omega^t)\}.
\end{equation*}
That the latter set is $\cF_t$-measurable now follows from \cite[Definition~14.3(c), p.644]{RockafellarWets} and \cite[Proposition~14.11(c), p.651]{RockafellarWets}.
\end{proof}
%
%
%
%
Now we prove that if $\Psi_{t+1}$ satisfies Assumption~\ref{ass:Psi-MultiP}, linearity of $\cK^{t+1}$ implies $\mathrm{NA}_t$.
\begin{proof}[Proof of Proposition~\ref{prop:local-NA}]
Assume that $\mathrm{NA}_t$ does not hold. Then by definition, there exists a probability measure in $\fP$ with its restriction to $\Omega^t$ denoted by $P^t$ such that
the  complement of the set
\begin{equation*}
G_t:=\big\{\omega^t \in \Omega^t \, \big|\, K_t(\omega^t)=- K_t(\omega^t)\big\}
\end{equation*}
satisfies $ P^t[G^c_t]>0$.
We claim that there is a strategy of the form $H^{t+1} = (0,\ldots,0,h_{t})$ and a measure $\widetilde{P}\in\fP$
such that $H^{t+1}\in\cK^{t+1}$, but $\widetilde{P}[\Psi^\infty_{t+1}(-H^{t+1})<0]>0$, i.e. $-H^{t+1}\not\in\cK^{t+1}$. The measure $\widetilde{P}$, restricted to $\Omega^{t+1}$, will be defined as $P^{t+1}=P^t\otimes P_t$ for some selection $P_t$ of $\fP_t$.

\textsc{Step 1:} We prove that there is a Borel measurable set-valued map $ K^{P^t}_t\colon\Omega^t\rightrightarrows\R^d$ that coincides with $K_t$ for $P^t$-a.a. $\omega^t$.

Let $\{x_n\}$ be the Castaing representation of $K_t$.
By~\cite[Lemma~7.27, p.173]{BertsekasShreve.78}  we can modify each of these universally measurable selections $x_n$ on a $P^t$-nullset to get a almost sure selections $x^{P^t}_n$ that are Borel measurable. Define a new set-valued map 
\begin{align*}
K^{P^t}_t(\omega^t) = \overline{\{ x^{P^t}_n(\omega^t)\,|\,n\in\N\}}.
\end{align*}
The set-valued map $K^{P^t}_t$ is Borel measurable by definition and $ K^{P^t}_t(\omega^t)        =K_t(\omega^t)$ for $P^t$-a.a. $\omega^t$; this follows from \cite[Proposition~14.11, p.651]{RockafellarWets} and \cite[Proposition~14.2, p.644]{RockafellarWets}.

\textsc{Step 2:} Define the set $\cS_t\subset\Omega^t\times \R^{d}\times \fM_1(\Omega_1)$ by
\begin{equation*}
\cS_t
:=\Big\{(\omega^t,h,P)\,\Big|\,h\in  K^{P^t}_t(\omega^t), \, P \in \fP_t(\omega^t), \, E^P\big[\Psi^\infty_{t+1}(\omega^t\otimes_t\cdot,0,\dots,0,-h)\big]<0\Big\}.
\end{equation*}
We claim that this set $\cS_t$ is analytic. 

To see this, write $\cS_t$ as an intersection of three sets:
\begin{align*}
B_1 &:= \big\{(\omega^t,h)\,|\,h\in  K^{P^t}_t(\omega^t)\big\}\times\fM_1(\Omega_1),\\
B_2 &:= \big\{(\omega^t,P)\,|\,P\in \fP_t(\omega^t)\big\}\times\R^d,\\
B_3 &:= \big\{(\omega^t,h,P)\,|\,E^P[\Psi^\infty_{t+1}(\omega^t\otimes_t\cdot,0,\dots,0,-h)]<0\big\},
\end{align*}
and show that each of those is analytic.

The set $B_1$ is Borel, as it is just a product of $\fM_1(\Omega_1)$ and the graph of $ K^{P^t}_t(\omega^t)$, which is Borel, see \cite[Theorem~14.8, p.648]{RockafellarWets}. 

The set $B_2$ is analytic being the product of $\R^d$ and the graph of $\fP_t$, which is analytic by assumption.

To show that $B_3$ is analytic, use the assumption that  $\Psi_{t+1}$ is a lower semianalytic map. By~\cite[Lemma~7.30(2), p.177]{BertsekasShreve.78} also the map $\Psi_{t+1}^\infty$ is; it is defined as a limit of lower semianalytic functions. By the same argument as in Lemma~\ref{le:ass-psi-phi-multiP}, we see that the set 
\begin{equation*}
\{(\omega^t,h_0,\dots,h_t,P)\in \Omega^t\times \R^{d(t+1)}\times \fM_1(\Omega_1)\,|\,E^P[\Psi^\infty_{t+1}(\omega^t\otimes_t\cdot,-h_0,\dots,-h_t)]<0\}
\end{equation*}
is analytic. Denote the above set by $\widetilde{B}_3$.
The projection 
\begin{align*}
\Pi\colon\Omega^{t+1}\times\R^{d(t+1)}\times \fM_1(\Omega_1)&\to \Omega^t\times \R^{d}\times \fM_1(\Omega_1) 
\\ 
(\omega^t,x_0,\dots,x_t,P)&\mapsto (\omega^t,x_t,P)
\end{align*} 
is continuous, and thus Borel. We deduce from \cite[Proposition~7.40, p.165]{BertsekasShreve.78} that the set
\begin{equation*}
B_3=\{(\omega^t,h,P)\in \Omega^t\times \R^d\times \fM_1(\Omega_1)\,|\,E^P[\Psi^\infty_{t+1}(\omega^t\otimes_t\cdot,0,\dots,0,-h)]<0\}
\end{equation*}
is analytic, as
\begin{equation*}
B_3=\Pi\Big(\widetilde{B}_3\cap\big(\Omega^t\times(\{0\}^{dt}\times\R^d)\times \fM_1(\Omega_1)\big) \Big).
\end{equation*} 

\textsc{Step 3:} The desired strategy can be obtained from the selection of $\cS_t$.

Define the set 
$$
\proj\cS_t=\big\{\omega^t\,\big|\,\cS_t\cap(\{\omega^t\}\times\R^d\times\fM_1(\Omega_1))\not=\varnothing \big\},
$$
which is just the projection of the set $\cS_t$ onto the first coordinate. Let us first show that the sets $\proj\cS_t$ and $G_t^c$ are equal up to a $P^t$ nullset. Recall that the probability measure $P^t$ was chosen at the beginning of the proof. Then, by definition of the sets $G^c_t$ and $\cS_t$ and as $K^{P^t}_t=K_t \ P^t$-a.s., 
we have $P^t[\proj\cS_t]=P^t[G_t^c]>0$. 

From Step~2, we know that $\cS_t$ is analytic. Therefore, the Jankov-von Neumann theorem  \cite[Proposition~7.49, p.182]{BertsekasShreve.78} implies the existence of a universally measurable map 
$\omega^t \mapsto (h_t(\omega^t),P_t(\omega^t))$ 
such that 
$(\omega^t,h_t(\omega^t),P_t(\omega^t)) \in \cS_t$ for all $\omega^t \in \proj\cS_t$. On the universally measurable set 
$\{h_t\notin K_t\}\subseteq \Omega^t$, we set $h_t:=0\in\R^d$ to guarantee that $h_t(\omega^t) \in K_t(\omega^t)$ for every $\omega^t\in\Omega^t$.  In the same way, we can define $P_t(\cdot)$ to be any measurable selector of $\fP_t(\cdot)$ on $\{h_t\notin K_t\}$. Recall that $\{h_t\notin K_t\}\cap \proj\cS_t$ is a $P^t$-nullset.

Finally, we claim for the strategy $H^{t+1}:=(0,\ldots,0,h_t)\in \cH^{t+1}$ that $H^{t+1}\in\cK^{t+1}$, but $-H^{t+1}\not\in\cK^{t+1}$. To see this, observe first that as $h_t(\omega^t) \in K_t(\omega^t)$ for all $\omega^t$, we have by definition 
\begin{equation*}
\Psi^\infty_{t+1}(\omega^t\otimes_t\cdot,H^{t+1}(\omega^t))\geq 0 \quad \fP_t(\omega^t)\mbox{-q.s.}
\end{equation*}
for all $\omega^t$. For every $\bar P \in  \fP$ denote its restriction to $\Omega^t$ by $\bar P^t$, i.e. $\bar P^t:=\bar P|_{\Omega^{t}}$. By the definition of the set $\fP$ we have $\bar P^{t+1}=\bar P^t\otimes \bar P_t$ for some selector $\bar P_t\in\fP_t$. By Fubinis theorem we get that every $\bar P\in \fP$ satisfies
\begin{equation*}
\bar P[\Psi^\infty_{t+1}(H^{t+1})\geq 0]=E^{\bar P^t(d\omega^t)}\Big[\bar P_t(\omega^t)\big[\Psi^\infty_{t+1}(\omega^t\otimes_t\cdot,H^{t+1}(\omega^t))\geq 0\big]\Big]=1,
\end{equation*} 
which proves that $H^{t+1}\in\cK^{t+1}$. To see that $-H^{t+1}\not\in\cK^{t+1}$,
define the measure $\widetilde{P} \in \fP$ by
\begin{equation*}
\widetilde{P}:=P^t\otimes P_t\otimes \widetilde P_{t+1}\otimes\dots\otimes \widetilde  P_{T-1}, 
\end{equation*}
where $P^t$ is the measure introduced at the beginning of the proof, the kernel $P_t$ is the one selected from $\cS_t$ above and $\widetilde P_{s} \in \fP_s$ are any selections of $\fP_s$ for $s:=t+1,\dots T-1$. Recalling that $\Psi^\infty_{t+1}\leq 0$, by definition of $h$ we get
\begin{equation*}
E^{\widetilde P}[\Psi^\infty_{t+1}(-H^t)]
=
E^{P^t(d\omega^t)}\Big[E^{P_t(\omega^t)}\big[\Psi^\infty_{t+1}(\omega^t\otimes_t\cdot,-H^t(\omega^t))\big]\Big]<0,
\end{equation*}
as $\widetilde P[\proj \cS_t] >0$. 
Hence $-H^{t+1}\not\in\cK^{t+1}$, which gives us a contradiction to the linearity of $\cK^{t+1}$.
\end{proof}
%
%
Now we will prove Proposition~\ref{prop:Psi-MultiP-OneP}, which is, basically, a (measurable) version of Theorem~\ref{thm:1-Per} stating the existence of a (local) maximizer in the one-period model at time $t$.
\begin{proof}[Proof of Proposition~\ref{prop:Psi-MultiP-OneP}]
Recall that, by Remark~\ref{rem:on assumption 2.1}, the map $\widetilde\Psi_{t}(\omega^t,\cdot)$ is concave, proper with $h^{\circ,t} \in \R^{dt}$ in the interior of its domain for each $\omega^t$. Hence by \cite[Theorem~2.35, p.59]{RockafellarWets}, we have for each $\omega^t$ that
\begin{equation*}
\Psi_t(\omega^t,x^t) = \lim_{\lambda\nearrow 1}\widetilde \Psi_t(\omega^t, \lambda x^t + (1-\lambda)h^{\circ,t}).
\end{equation*}
To prove \emph{(i)}, we want to show that for $\fP$-quasi every $\omega^t$ the mapping $x^{t+1}\mapsto\Phi_t(\omega^t,x^{t+1})$ satisfies the conditions of Proposition~\ref{prop:thm9.2R} with the linear mapping $A$ being just the restriction $A(x^t,x_t)=x^t$. Fix an $\omega^t\in\Omega^t$. We deduce from the identity in \eqref{eq:psi-phi-infty-multiP} that 
$\Phi_t^\infty(\omega^t,0,\dots,0,x_t)\geq 0$ if and only if $\Psi^\infty_{t+1}(\omega^t\otimes_t\,\cdot,0,x_t)\geq 0$ $\ \fP_t(\omega^t)$-q.s., i.e. if $x_t\in K_t(\omega^t)$. We know from Proposition~\ref{prop:local-NA} that $\mbox{NA}_t$ holds, which means that $K_t(\omega^t)$ is linear for $\fP$-quasi every $\omega^t$. Thus, for $\fP$-quasi every $\omega^t$, the conditions of   Proposition~\ref{prop:thm9.2R} are indeed satisfied and hence $\widetilde\Psi_t(\omega^t,\cdot)$ is $\fP$-q.s. an upper-semicontinuous function. Moreover, from the definition of $\Psi_t(\omega^t,\cdot)$ being the upper-semicontinuous hull of $\widetilde\Psi_t(\omega^t,\cdot)$, statement \emph{(i)} follows.

We now prove \emph{(ii)}.
By Lemma~\ref{le:ass-psi-phi-multiP} and Remark~\ref{rem:ass-normal-multiP} we know that $\Phi_t$ is an $\cF_t$-normal integrand. 
Having chosen a strategy $H^t \in \cH^t$, \cite[Proposition~14.45, p.669]{RockafellarWets} yields that the mapping $\Phi^{H^t}(\omega^t,x):=\Phi_t(\omega^t,H^t(\omega^t),x)$ is a $\cF_t$-normal integrand, too. Therefore, we deduce from \cite[Theorem~14.37, p.664]{RockafellarWets} that the set-valued mapping $\Upsilon\colon\Omega^t \rightrightarrows \R^d$ defined by
\begin{equation*}
\Upsilon(\omega^t):=\mathrm{argmax}\ \Phi^{H^t}(\omega^t,\cdot)
\end{equation*}
admits an $\cF_t$-measurable selector $\widehat{h}_t$ on the universally measurable set $\{\Upsilon\neq \emptyset\}$. Extend $\widehat{h}_t$ by setting $\widehat{h}_t=0$ on $\{\Upsilon=\emptyset\}$. As we know from Proposition~\ref{prop:local-NA} that 
$\mbox{NA}_t$ holds, the attainment of the supremum in Proposition~\ref{prop:thm9.2R} gives that $\{\Upsilon=\emptyset\}$ is a $\fP$-polar set. Thus, the result follows, as $\Psi_t=\widetilde \Psi_t \ \fP$-q.s.
\end{proof}
%
Next, to see that $\Psi_{t+1}$ satisfying Assumption~\ref{ass:Psi-MultiP} implies that  $\Psi_{t}$ does, too, it remains to show that $\Psi_t$ is lower semianalytic. To that end, we first need the following useful lemma.
\begin{lemma}\label{le:dense-MultiP}
Let $g\colon \R^n\times\R^m\rightarrow\R\cup\{-\infty\}$ be a concave upper-semicontinuous function having $(x^\circ,y^\circ) \in \R^n \times \R^m$  in the interior of its domain. Then, the function 
$$
h(x) := \lim_{\lambda\nearrow 1}\sup_{y\in \R^m} g(\lambda x+(1-\lambda) x^{\circ},y)
$$
is upper-semicontinuous and satisfies
$$
h(x)  =\lim_{\lambda\nearrow 1}\sup_{y\in \Q^m} g(\lambda x+(1-\lambda) x^{\circ},y).
$$
\end{lemma}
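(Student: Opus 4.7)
The plan is to introduce $\phi(x) := \sup_{y\in\R^m} g(x,y)$ and $\phi_\Q(x) := \sup_{y\in\Q^m} g(x,y)$, so that with $z_\lambda := \lambda x + (1-\lambda)x^\circ$ the asserted identity reads $\lim_{\lambda\nearrow 1}\phi(z_\lambda) = \lim_{\lambda\nearrow 1}\phi_\Q(z_\lambda)$. The joint concavity of $g$ yields concavity of $\phi$, and $(x^\circ, y^\circ)\in\Int(\dom g)$ implies $x^\circ \in \Int(\dom\phi)$. Since $\phi$ is proper concave with $x^\circ$ in the relative interior of its domain, \cite[Theorem~2.35, p.59]{RockafellarWets} yields $h(x)=(\cl\phi)(x)$; in particular $h$ is upper-semicontinuous.

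The heart of the argument is to show $\phi(z_\lambda)=\phi_\Q(z_\lambda)$ for every $\lambda\in[0,1)$ with $z_\lambda\in\dom\phi$ (when $z_\lambda\notin\dom\phi$ both sides are $-\infty$). For such $\lambda$ the standard line-segment principle for convex sets gives $z_\lambda\in\Int(\dom\phi)$, and the fact that the projection onto $\R^n$ sends the interior of a convex set with nonempty interior to the interior of its image (\cite[Theorem~6.6]{Rockafellar.97}) produces a $y^*\in\R^m$ with $(z_\lambda,y^*)\in\Int(\dom g)$. Thus the proper concave upper-semicontinuous function $y\mapsto g(z_\lambda,y)$ has a point in the interior of its own domain and is continuous there, and a second application of \cite[Theorem~2.35, p.59]{RockafellarWets}, now in the $y$-variable, gives
\begin{equation*}
\sup_{y\in\R^m} g(z_\lambda,y) \;=\; \sup_{y\in\Int(\dom g(z_\lambda,\cdot))} g(z_\lambda,y).
\end{equation*}
The denseness of $\Q^m$ in the open set $\Int(\dom g(z_\lambda,\cdot))$ combined with continuity of $g(z_\lambda,\cdot)$ on this set shows that this common value equals $\sup_{y\in\Q^m\cap\Int(\dom g(z_\lambda,\cdot))} g(z_\lambda,y)\le\phi_\Q(z_\lambda)$, and the trivial inequality $\phi_\Q\le\phi$ yields the desired equality.

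Passing to the limit $\lambda\nearrow 1$ completes the proof. The main obstacle is the key step just sketched: $\phi_\Q$ is a supremum of concave functions over the non-convex index set $\Q^m$, and as such need not itself be concave, so the convex-analytic shortcut of applying \cite[Theorem~2.35, p.59]{RockafellarWets} directly to $\phi_\Q$ is unavailable. The way around this is to work pointwise along the ray: lift every $z_\lambda$ with $\lambda<1$ into $\Int(\dom g)$, exploit the continuity of $g$ on the interior of its domain, and then the replacement of $\R^m$ by its dense subset $\Q^m$ reduces to a routine density argument.
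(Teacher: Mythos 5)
Your argument is correct and essentially the same as the paper's: both proofs identify $h$ as the upper-semicontinuous hull of $\phi=\sup_{y\in\R^m}g(\cdot,y)$ via \cite[Theorem~2.35]{RockafellarWets}, and both reduce the $\Q^m$-versus-$\R^m$ question to the observation that for $\lambda<1$ the slice of $\dom g$ over $z_\lambda$ has nonempty interior (the paper argues this from the ball around $(x^\circ,y^\circ)$, you via the relative-interior-of-projections fact), on which $g(z_\lambda,\cdot)$ is continuous so that a dense countable set suffices. One cosmetic point: the line-segment principle yields $z_\lambda\in\Int(\dom\phi)$ for all $\lambda\in[0,1)$ from $x\in\overline{\dom\phi}$ (the paper's Case~1), not from $z_\lambda\in\dom\phi$ alone --- the latter can leave $z_\lambda$ on the boundary of $\dom\phi$ for one value of $\lambda$ --- but this does not affect the limit, since when $x\notin\overline{\dom\phi}$ both sides are eventually $-\infty$ (the paper's Case~2).
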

\begin{proof}
Define the function
\begin{equation*}
\widetilde{h}(x):= \sup_{y\in \R^m} g(x,y).
\end{equation*}
Then, by \cite[Theorem~2.35, p.59]{RockafellarWets}, 
$h(x)=\mbox{cl}\,\widetilde{h}(x)$, i.e. $h$ is the upper-semicontinuous hull of $\widetilde{h}$; in particular, it is upper-semicontinuous.
Now, denote by 
\begin{align*}
\dom g&:=\{(x,y)\in\R^n\times\R^m\,|\, g(x,y)>-\infty\},\\
\dom \widetilde{h}&:=\{x\in\R^n\,|\, \widetilde{h}(x)>-\infty\}
\end{align*}
the domains of the functions $g$ and $\widetilde{h}$, respectively. 
We have to differentiate two cases.

\textsc{Case 1:} Let $x\in\overline{\dom \widetilde{h}}$. Then for each $\lambda\in(0,1)$ we have  $\widetilde{h}(\lambda x+(1-\lambda) x^{\circ})>-\infty$ and also 
$$
D_{\lambda x+(1-\lambda) x^{\circ}}:=\dom g\cap\{\lambda x+(1-\lambda) x^{\circ}\}\times\R^m\not=\varnothing.
$$
Denote by $\Pi\colon\R^n\times\R^m\rightarrow\R^m$ the projection on the second component. The set $\Pi D_x$ does not necessarily have a nonempty interior, but $\Pi D_{\lambda x+(1-\lambda) x^{\circ}}$ has, by assumption that an open ball around $(x^{\circ},y^\circ)$ is included in the domain of $g$. Hence using \cite[Theorem~2.35, p.59]{RockafellarWets} yields
$$
h(\lambda x+(1-\lambda) x^{\circ})
= \widetilde{h}(\lambda x+(1-\lambda) x^{\circ})
=\sup_{y\in \Q^m} g(\lambda x+(1-\lambda) x^{\circ},y).
$$
Taking the limit as $\lambda\nearrow 1$, using 
the upper-semicontinuity of $h$ proves the claim in the first case. 

\textsc{Case 2:} Let $x\not\in\overline{\dom \widetilde{h}}$. In this case, there exists a $\lambda_m\in(0,1)$, such that $\widetilde{h}(\lambda x+(1-\lambda) x^{\circ})=-\infty$ for all $\lambda>\lambda_m$. This implies that the set $D_{\lambda x+(1-\lambda) x^{\circ}}$ defined above is empty for each $\lambda>\lambda_m$, which yields the claim.
\end{proof}
%
%
\begin{lemma}\label{le:Psi-t-lsa-MultiP}
Fix $t \in \{0,\dots,T-1\}$. If $\Psi_{t+1}$ satisfies Assumption~2.1, then the map $\Psi_t$ is lower semianalytic.
\end{lemma}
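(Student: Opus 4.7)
The plan is to express $\Psi_t$ as a countable $\limsup$ of lower semianalytic (LSA) functions, exploiting Lemma~\ref{le:dense-MultiP} to trade the uncountable supremum in the definition of $\widetilde\Psi_t$ for a limit along rational rays emanating from $h^{\circ,t}$.

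First I would invoke Lemma~\ref{le:ass-psi-phi-multiP} to conclude that $\Phi_t$ is lower semianalytic, concave and upper-semicontinuous in $x^{t+1}$, bounded above by $C$, and satisfies $\Phi_t(\omega^t,x^{t+1})\ge -c$ whenever $\|x^{t+1}-h^{\circ,t+1}\|\le\varepsilon$. Writing $h^{\circ,t+1}=(h^{\circ,t},h^\circ_t)$ and specializing $\tilde x=h^\circ_t$ yields $\widetilde\Psi_t(\omega^t,x^t)\ge\Phi_t(\omega^t,x^t,h^\circ_t)\ge -c$ on a neighborhood of $h^{\circ,t}$, uniformly in $\omega^t$; thus $h^{\circ,t}$ lies in the interior of $\dom\widetilde\Psi_t(\omega^t,\cdot)$ for every $\omega^t$, as already recorded in Remark~\ref{rem:on assumption 2.1}.

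Next, fix any sequence $(\lambda_n)\subset\Q\cap(0,1)$ with $\lambda_n\nearrow 1$ and set
\begin{equation*}
f_n(\omega^t,x^t):=\sup_{\tilde x\in\Q^d}\Phi_t\big(\omega^t,\,\lambda_n x^t+(1-\lambda_n)h^{\circ,t},\,\tilde x\big).
\end{equation*}
Each $f_n$ is LSA: the affine map $(\omega^t,x^t)\mapsto(\omega^t,\lambda_n x^t+(1-\lambda_n)h^{\circ,t},\tilde x)$ is Borel, so each term in the supremum is LSA by \cite[Lemma~7.30(3), p.177]{BertsekasShreve.78}, and LSA is preserved under countable suprema since analytic sets are stable under countable intersections.

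Now I would apply Lemma~\ref{le:dense-MultiP} pointwise in $\omega^t$ with $g(x^t,\tilde x):=\Phi_t(\omega^t,x^t,\tilde x)$ and interior point $(h^{\circ,t},h^\circ_t)$ to obtain
\begin{equation*}
\Psi_t(\omega^t,x^t)=\lim_{\lambda\nearrow 1}\sup_{\tilde x\in\Q^d}\Phi_t\big(\omega^t,\,\lambda x^t+(1-\lambda)h^{\circ,t},\,\tilde x\big).
\end{equation*}
Since this limit exists pointwise, it coincides with the limit along the sequence $(\lambda_n)$, so
\begin{equation*}
\Psi_t(\omega^t,x^t)=\lim_{n\to\infty}f_n(\omega^t,x^t)=\inf_{n\ge 1}\sup_{m\ge n}f_m(\omega^t,x^t),
\end{equation*}
which is LSA as a countable infimum of countable suprema of LSA functions. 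The conceptual hurdle is recognizing that Assumption~\ref{ass:Psi-MultiP}(4) is precisely what allows the uncountable sup $\sup_{\tilde x\in\R^d}$ (which generally destroys lower semianalyticity) to be replaced by the countable sup $\sup_{\tilde x\in\Q^d}$ at the cost of a ray limit in $\lambda$; this is the role played by Lemma~\ref{le:dense-MultiP} and the reason the interior-point regularity hypothesis is imposed.
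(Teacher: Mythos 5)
Your proof is correct and follows essentially the same route as the paper: invoke Lemma~\ref{le:ass-psi-phi-multiP} for lower semianalyticity of $\Phi_t$, use Lemma~\ref{le:dense-MultiP} to replace the uncountable supremum by a countable one along rational rays from $h^{\circ,t}$, and conclude via stability of lower semianalyticity under countable suprema and pointwise limits. Your version merely spells out the details (the interior-point verification and the $\inf_n\sup_{m\ge n}$ rewriting) that the paper leaves implicit.
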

\begin{proof}
By Lemma~\ref{le:ass-psi-phi-multiP}, the map
\begin{equation*}
\Omega^t\times \R^{d t} \times \R^d \to \overline{\R}, \quad (\omega^t,x^t,\tilde{x})\mapsto 
\Phi_t(\omega^t,x^t,\tilde{x})
\end{equation*} 
is lower semianalytic. Lemma~\ref{le:dense-MultiP} now yields
\begin{equation*}
\Psi_t(\omega^t,x^t)=\lim\limits_{\lambda\nearrow 1} \sup_{\tilde{x} \in \Q^d} \Phi_t(\omega^t, \lambda x^t + (1-\lambda) h^{\circ,t},\tilde{x}).
\end{equation*} 
This implies that $\Psi_t$ is lower semianalytic due to the fact that countable supremum of lower semianalytic functions is again lower semianalytic and a limit of a sequence of lower semianalytic functions is again lower semianalytic, see   Lemma~\cite[Lemma~7.30(2), p.178]{BertsekasShreve.78}.
\end{proof}
%
%
%
%
\begin{proof}[Proof of Proposition~\ref{prop:no-arbitrage-up-time-t-MultiP}]
The structure of the proof is similar to the one of  Proposition~\ref{prop:local-NA}. Assume by contradiction that $\cK^{t}$ is not linear. Then, there exists a probability measure in $\fP$ with its restriction to  $\Omega^t$ denoted by $P^t$, and $\widetilde{H}^t\in \cH^t$ such that
\begin{equation*}
\Psi^{\infty}_t(\widetilde H^t)\geq 0 \quad\fP\mbox{-q.s.} \quad \ \mbox{ and } \ \quad P^t[\Psi^{\infty}_t(-\widetilde H^t)<0]>0.
\end{equation*}
\textsc{Step 1:} We claim that there exists an $\cF_t$-measurable map $\widetilde{h}_t:\Omega^t \to \R^d$ such that
\begin{equation*}
\Phi^\infty_{t}(\widetilde{H}^t,\widetilde{h}_t)= \Psi^\infty_{t}(\widetilde{H}^t)\quad \fP\mbox{-q.s.}
\end{equation*}
Indeed, applying Proposition~\ref{prop:thm9.2R} to the function $f(\cdot)=\Phi_t^\infty(\omega^t,\cdot)$ yields that the set-valued map
\begin{equation*}
M(\omega^t):=\big\{h \in \R^d \,\big| \,\Phi_{t}^\infty(\omega^t, \widetilde H^t,h)=\Psi_t^\infty(\omega^t, \widetilde H^t)\big\}
\end{equation*}
is not empty for $\fP$-quasi every $\omega^t$. 
Then, following the proof of Lemma \ref{prop:Psi-MultiP-OneP}(ii) using \cite[Theorem~14.37, p.664]{RockafellarWets} provides existence of an $\cF_t$-measurable selector $\widetilde{h}_t$ of $M$.

\textsc{Step 2:} Let us show that $\widetilde{H}^{t+1}:=(\widetilde H^t,\widetilde h_t) \in \cH^{t+1}$ satisfies $\Psi^{\infty}_{t+1}(\widetilde H^{t+1})\geq 0$ $\fP\mbox{-q.s.}$, i.e. $\widetilde{H}^{t+1} \in \cK^{t+1}$.

For $\fP$-q.e. $\omega^t$ we have that
\begin{align*}
0= \Psi^{\infty}_t(\omega^t,\widetilde H^t(\omega^t)) =\Phi^{\infty}_{t}(\omega^t,\widetilde H^{t+1}(\omega^t))=\inf_{P \in \fP_t(\omega^t)} E^P[\Psi^\infty_{t+1}(\omega^t\otimes_t \cdot,\widetilde H^{t+1}(\omega^t))].
\end{align*}
As every $P'\in\fP$ satisfies $P'|_{\Omega^{t+1}}=P'|_{\Omega^{t}}\otimes P'_t$ for some selection $P'_t\in\fP_t$, we obtain the result directly from Fubini's theorem.

\textsc{Step 3:} We want to show that $-\widetilde{H}^{t+1} \notin \cK^{t+1}$. To see this,  recall the probability measure $P^t$ on $\Omega^t$ introduced at the beginning of the proof. We first modify  $\widetilde{H}^{t+1}$ on a $P^t$-nullset to obtain a Borel measurable function $\widetilde{H}^{P^t,t+1}$. Consider the set
\begin{equation*}
\fS_{t}:=\Big\{(\omega^{t},P) \in \Omega^{t}\times \fM_1(\Omega_1)\,\Big| \,P \in \fP_{t}(\omega^t), \ E^P\big[\Psi^\infty_{t+1}(\omega^t\otimes_t\cdot,-\widetilde H^{P^t, t+1}(\omega^t))\big]<0 \Big\}.
\end{equation*}
Using the same arguments as in Step~2 and  Step~3 of Proposition~\ref{prop:local-NA} we get that $\fS_{t}$ is analytic, hence there exists an universally measurable kernel $P_t:\Omega^t\mapsto \fM_1(\Omega_1)$ such that $(\omega^t,P_t(\omega^t))\in \fS_t$ for all $\omega^t \in \mbox{proj }\fS_t$. 

We claim that $P^t[\proj\fS_t]>0$. To see this, observe first that $\fP$-q.s., we have for any $x^t \in \R^{dt}$, $x' \in \R^d$ that $\Psi_t(x^t)\geq \Phi_t(x^t,x')$. Hence, we obtain from \cite[Theorem~3.21, p.88]{RockafellarWets} that $\fP$-q.s., we also have $\Psi^\infty_t(x^t)\geq \Phi^\infty_t(x^t,x')$ for any $x^t \in \R^{dt}$, $x' \in \R^d$. Therefore, we have for $\fP$-quasi every $\omega^t$ that
\begin{align*}
\Psi_t^\infty(\omega^t,-\widetilde H^{t}(\omega^t))\geq \Phi_t^\infty(\omega^t,-\widetilde H^{t+1}(\omega^t))
=\inf_{P \in \fP_t(\omega^t)}E^P\big[\Psi_{t+1}^\infty(\omega^t\otimes_t \cdot,-\widetilde H^{t+1}(\omega^t))\big].
\end{align*}
As $P^t[\Psi_t^\infty(-\widetilde H^{t})]<0]>0$ and $H^{t+1}=H^{P^t, t+1} \ P^t$-a.s., we conclude that indeed $P^t[\mbox{proj }\fS_t]>0$. Finally, to see that $-H^{t+1}\not\in\cK^{t+1}$,
define the measure $\widetilde{P} \in \fP$ by
\begin{equation*}
\widetilde{P}:=P^t|_{\Omega^t}\otimes P_t\otimes \widetilde P_{t+1}\otimes\dots\otimes \widetilde  P_{T-1}, 
\end{equation*}
where we take any selector $\widetilde P_{s} \in \fP_s$ for $s:=t+1,\dots T-1$. Then, by construction, we have
\begin{equation*}
E^{\widetilde P}[\Psi^\infty_{t+1}(-\widetilde H^{t+1})]<0,
\end{equation*}
hence indeed $-H^{t+1}\not\in\cK^{t+1}$, which gives a contradiction to the linearity of $\cK^{t+1}$.
\end{proof}
%
%
%
\begin{proof}[Proof of Proposition~\ref{prop:induction}]
We have shown that if $\Psi_{t+1}$ satisfies Assumption~\ref{ass:Psi-MultiP} and $\cK^{t+1}$ is linear (i.e. $\mbox{NA}(\fP)^{t+1}$ holds), then $\Psi_{t}$ satisfies Assumption~\ref{ass:Psi-MultiP} and the local no-arbitrage condition $\mathrm{NA}_t$ holds. 
Moreover, we have shown that the linearity of $\cK^{t+1}$ implies the linearity of $\cK^{t}$ as soon as $\Psi_{t+1}$ satisfies Assumption~\ref{ass:Psi-MultiP}. As
by assumption, $\Psi=\Psi_{T}$ satisfies Assumption~\ref{ass:Psi-MultiP} and $\mbox{NA}(\fP)$ holds, 
we see that Proposition~\ref{prop:induction} holds by using backward induction.
\end{proof}
%
%
\subsection{Proof of Theorem~\ref{thm:Maxim-exist-MultiP}}\label{subsec:Proof-main-thm}
The goal of this subsection is to give the proof of Theorem~\ref{thm:Maxim-exist-MultiP}, which is the main result of this paper. We will construct the optimal strategy $\widehat{H}:=(\widehat H_0,\dots,\widehat H_{T-1}) \in \cH$ recursively from time $t=0$ upwards by applying Proposition~\ref{prop:Psi-MultiP-OneP} at each time $t$, given the  restricted strategy $\widehat{H}^t:=(\widehat H_0,\dots,\widehat H_{t-1})\in \cH^t$. We follow \cite{Nutz.13util} to check that $\widehat{H}$ is indeed an optimizer of \eqref{eq:thm-optimal-MultiP}.
%
%
%
\begin{lemma}\label{le:Prob-Meas-Approx-MultiP}
Let $t \in \{0,\dots,T-1\}$ and $H^{t+1}\in\cH^{t+1}$. Define the random variable 
\begin{equation*}
X(\omega^t):=\Phi_t(\omega^t,H^{t+1}(\omega^t)).
\end{equation*}
For any $\varepsilon>0$, there exists a universally measurable kernel $P^\varepsilon_t:\Omega^t \to \fM_1(\Omega_1)$ such that $P^\varepsilon_t(\omega^t) \in \fP_t(\omega^t)$ for all $\omega^t \in \Omega^t$ and 
\begin{equation*}
E^{P^\varepsilon_t(\omega^t)}[\Psi_{t+1}(\omega^t\otimes_t\cdot, H^{t+1}(\omega^t))]\leq
\begin{cases}
X(\omega^t) + \eps & {\rm{if }}\ X(\omega^t)>-\infty, \\
-\eps^{-1} & {\rm otherwise.}
\end{cases}
\end{equation*}
\end{lemma}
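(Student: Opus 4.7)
The strategy is a standard Jankov--von Neumann measurable selection coupled with a countable-partition argument according to the values of $X$. The bounds on $f$ are built into the level sets of a fine partition so that, on each piece, the inequality to be satisfied depends on $\omega^t$ only through analytic data.

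First I would show that the map $(\omega^t,P)\mapsto f(\omega^t,P):=E^P[\Psi_{t+1}(\omega^t\otimes_t\cdot,H^{t+1}(\omega^t))]$ is lower semianalytic on $\Omega^t\times\fM_1(\Omega_1)$. The argument mimics the corresponding step in the proof of Lemma~\ref{le:ass-psi-phi-multiP}: $\Psi_{t+1}$ is lower semianalytic, $H^{t+1}$ is $\cF_t$-measurable, and one applies \cite[Proposition~7.48, p.180]{BertsekasShreve.78} to the Borel stochastic kernel $((\omega^t,P),A)\mapsto P[A]$ on $\Omega_1$ given $\Omega^t\times\fM_1(\Omega)$. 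Consequently $X=\inf_{P\in\fP_t}f$ is universally measurable in $\omega^t$ and bounded above by the constant $C$ from Assumption~\ref{ass:Psi-MultiP}(2).

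The reason one cannot apply Jankov--von Neumann in a single shot is that $X$ is in general only universally measurable, so the set $\{(\omega^t,P):f(\omega^t,P)\leq X(\omega^t)+\varepsilon\}$ need not be analytic. I would circumvent this by partitioning $\Omega^t$ into the universally measurable pieces
\[D_n:=\{n\varepsilon\leq X<(n+1)\varepsilon\},\qquad n\in\Z,\ n\leq\lceil C/\varepsilon\rceil,\]
together with $D_\infty:=\{X=-\infty\}$. On each $D_n$ the requirement $f\leq X+\varepsilon$ is implied by the simpler requirement $f<(n+1)\varepsilon$, while on $D_\infty$ the requirement $f\leq -\varepsilon^{-1}$ is implied by $f<-\varepsilon^{-1}$.

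On each piece I would apply Jankov--von Neumann to an explicitly analytic set. For each admissible $n$, the set
\[\Gamma_n:=\mbox{graph}(\fP_t)\cap\{(\omega^t,P):f(\omega^t,P)<(n+1)\varepsilon\}\]
is the intersection of an analytic set with an analytic set (since $f$ is lower semianalytic), hence analytic, with projection $\{X<(n+1)\varepsilon\}\supseteq D_n$; \cite[Proposition~7.49, p.182]{BertsekasShreve.78} yields a universally measurable kernel $P^n$ with $(\omega^t,P^n(\omega^t))\in\Gamma_n$ on $\mbox{proj}\,\Gamma_n$. The analogous construction with $\{f<-\varepsilon^{-1}\}\cap\mbox{graph}(\fP_t)$ produces $P^\infty$ on $D_\infty$. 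Since $\{D_n\}\cup\{D_\infty\}$ is a countable partition of $\Omega^t$, pasting
\[P^{\varepsilon}_t(\omega^t):=\sum_{n\leq\lceil C/\varepsilon\rceil}\mathbf{1}_{D_n}(\omega^t)P^n(\omega^t)+\mathbf{1}_{D_\infty}(\omega^t)P^\infty(\omega^t)\]
gives a universally measurable kernel with $P^{\varepsilon}_t(\omega^t)\in\fP_t(\omega^t)$, and the required bound follows on each piece of the partition by construction. The main technical obstacle is precisely the mismatch between the universally measurable character of $X$ and the analyticity needed for measurable selection, which the dyadic partition resolves at the cost of a countable patching.
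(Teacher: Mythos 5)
There is a genuine gap at the very first step of your argument, and everything downstream depends on it. You claim that $(\omega^t,P)\mapsto f(\omega^t,P)=E^P[\Psi_{t+1}(\omega^t\otimes_t\cdot,H^{t+1}(\omega^t))]$ is lower semianalytic. But $H^{t+1}$ is only $\cF_t$-measurable, i.e.\ universally measurable, not Borel: the strategies in $\cH$ are adapted to the \emph{universal completion} of the Borel filtration. Pre-composing a lower semianalytic function with a merely universally measurable map does not preserve lower semianalyticity --- \cite[Lemma~7.30(3), p.177]{BertsekasShreve.78} requires the inner map to be Borel --- so $f$ is in general only universally measurable. Consequently the sets $\{f<(n+1)\varepsilon\}$, and hence your sets $\Gamma_n$, are only universally measurable rather than analytic, and the Jankov--von Neumann theorem cannot be applied to them. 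Your diagnosis that the obstacle lies in the universal measurability of $X$ is therefore incomplete: the same obstacle already afflicts $f$ itself, and the level-set partition does not remove it. Note also that one cannot simply replace $H^{t+1}$ by a Borel modification, since in the nondominated setting there is no single reference measure against which to modify, while the lemma requires the kernel and the inequality for \emph{all} $\omega^t$.

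The fix is the one the paper uses: keep $x^{t+1}$ as a free variable. The map $(\omega^t,P,x^{t+1})\mapsto E^P[\Psi_{t+1}(\omega^t\otimes_t\cdot,x^{t+1})]$ \emph{is} lower semianalytic (this is exactly the computation in the proof of Lemma~\ref{le:ass-psi-phi-multiP}), and the graph of $\fP_t$, viewed as a correspondence on $\Omega^t\times\R^{d(t+1)}$ ignoring the second coordinate, is analytic. One then invokes the $\varepsilon$-optimal selection theorem \cite[Prop.~7.50, p.184]{BertsekasShreve.78} --- whose proof is essentially your partition-plus-Jankov--von Neumann argument, carried out where the analyticity genuinely holds --- to obtain a universally measurable kernel $\tilde P^\varepsilon_t(\omega^t,x^{t+1})\in\fP_t(\omega^t)$ satisfying the $\varepsilon$-optimality bound pointwise in $(\omega^t,x^{t+1})$. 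Only at the end does one set $P^\varepsilon_t(\omega^t):=\tilde P^\varepsilon_t(\omega^t,H^{t+1}(\omega^t))$; the composition of universally measurable maps is universally measurable, which is all the statement of the lemma requires, and the inequality follows by pointwise substitution. Your partition construction would go through verbatim on the extended space $\Omega^t\times\R^{d(t+1)}$, so the repair is local rather than structural.
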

\begin{proof}
For any $x^{t+1}\in \R^{d(t+1)}$, define the random variable
\begin{equation*}
\Phi^{x^{t+1}}(\omega^t):=\Phi_t(\omega^t,x^{t+1}).
\end{equation*}
By the proof of Lemma~\ref{le:ass-psi-phi-multiP},
the map $(\omega^t,P,x^{t+1})\mapsto E^P[\Psi_{t+1}(\omega^t\otimes_t\,\cdot,x^{t+1})]$ is lower semianalytic. Moreover, by assumption,  $\mbox{graph}(\fP_t)$ is analytic. Therefore, by \cite[Theorem~7.50,p.184]{BertsekasShreve.78}, it admits a universally measurable kernel $(\omega^t,x^{t+1})\mapsto \tilde{P}^{\varepsilon}_t(\omega^t,x^{t+1}) \in \fP_t(\omega^t)$ satisfying 
\begin{equation*}
E^{\tilde{P}^\varepsilon_t(\omega^t,x^{t+1})}[\Psi_{t+1}(\omega^t\otimes_t \cdot, x^{t+1})]\leq
\begin{cases}
\Phi^{x^{t+1}}(\omega^t) + \eps & \text{if }
\Phi^{x^{t+1}}_t(\omega^t)>-\infty, \\
-\eps^{-1} & {\rm otherwise.}
\end{cases}
\end{equation*}
Setting $P^\varepsilon_t(\omega^t):=\tilde{P}^\varepsilon(\omega^t,H^{t+1}(\omega^t))$ yields the result, as $X(\omega^t)=\Phi^{H^{t+1}( \omega^t)}(\omega^t)$.
\end{proof}
%
%
Now we are able to prove Theorem~\ref{thm:Maxim-exist-MultiP}.
\begin{proof}[Proof of Theorem~\ref{thm:Maxim-exist-MultiP}]
By Theorem~\ref{thm:1-Per},
there exists $\widehat{H}_0\in \R^d$ such that
\begin{equation*}
\inf_{P_\in \fP_0} E^P[\Psi_1(\widehat{H}_0)]=\sup_{x\in \R^d} \inf_{P_\in \fP_0} E^P[\Psi_1(x)].
\end{equation*} 
By a recursive application of Proposition~\ref{prop:Psi-MultiP-OneP}, we can define an $\cF_t$-measurable random variable $\widehat{H}_t$ such that
\begin{equation*}
\inf_{P \in \fP_t(\omega^t)}E^P[\Psi_{t+1}(\omega^t\otimes_t \cdot,\widehat H^t(\omega^{t-1}),\widehat{H}_{t}(\omega^t))]
=\Psi_t(\omega^t,\widehat H^t(\omega^{t-1}))
\end{equation*}
for $\fP$-quasi-every $\omega^t\in \Omega^t$, for all $t=1,\dots T-1$. We claim that $\widehat H\in\cH$ is optimal, i.e. satisfies \eqref{eq:thm-optimal-MultiP}.
We first show that
\begin{equation}\label{eq:pf-thm-claim1-MultiP}
\inf_{P\in \fP}E^P[\Psi_T(\widehat H)]\geq \Psi_0.
\end{equation}
To that end, let $t \in \{0,\dots,T-1\}$. Let $P \in \fP$; we write $P= P_0\otimes\dots\otimes P_{T-1}$ with kernels $\fP_s:\Omega^s\to\fM_1(\Omega_1)$  satisfying $P_s(\cdot)\in \fP_s(\cdot)$. Therefore, by applying Fubini's theorem and the definition of $\widehat H$
\begin{align*}
\ E^P[&\Psi_{t+1}(\widehat{H}_0,\dots,\widehat{H}_{t})]\\[1ex]
&=  \ E^{(P_0\otimes\dots\otimes P_{t-1})(d\omega^t)}\Big[E^{P_t(\omega^t)}\big[\Psi_{t+1}(\omega^t\otimes_t\,\cdot,\widehat{H}^t(\omega^{t-1}),\widehat{H}_{t}(\omega^t))\big]\Big]\\[1ex]
&\geq  \ E^{(P_0\otimes\dots\otimes P_{t-1})(d\omega^t)}\Big[\inf_{P' \in \fP_t(\omega^t)}E^{P'}\big[\Psi_{t+1}(\omega^t\otimes_t\,\cdot,\widehat H^t(\omega^{t-1}),\widehat{H}_{t}(\omega^t))\big]\Big]\\[1ex]
&= \ E^{(P_0\otimes\dots\otimes P_{t-1})}[\Psi_t(\widehat H^t)]\\[1ex]
&= E^P[\Psi_t(\widehat H^t)].
\end{align*}
Using this inequality repeatedly from $t=T-1$ to $t=0$ yields $E^P[\Psi_T(\widehat H)]\geq \Psi_0$. As $P \in \fP$ was arbitrarily chosen, the claim \eqref{eq:pf-thm-claim1-MultiP} is proven. It remains to show that
\begin{equation*}
\Psi_0 \geq \sup_{H \in \cH}\inf_{P \in \fP}E^P[\Psi(H)] 
\end{equation*}
to see that $\widehat{H} \in \cH$ is optimal. So, fix an arbitrary $H \in \cH$. It suffices to show that for every $t \in \{0,\dots,T-1\}$
\begin{equation}\label{eq:pf-thm-claim2-MultiP}
\inf_{P \in \fP}E^P[\Psi_t(H^t)]
\geq
\inf_{P \in \fP}E^P[\Psi_{t+1}(H^{t+1})].
\end{equation}
Indeed, using the inequality repeatedly from $t=0$ until $t=T-1$ yields
\begin{equation*}
\Psi_0\geq \inf_{P\in \fP}E^{P}[\Psi_{T}( H)].
\end{equation*}
Furthermore, as $H \in \cH$ was arbitrary and $\Psi_T=\Psi$, we obtain the desired inequality
\begin{equation*}
\Psi_0\geq \sup_{H \in \cH}\inf_{P\in \fP}E^{P}[\Psi( H)].
\end{equation*}

Now, to prove the inequality in~\eqref{eq:pf-thm-claim2-MultiP}, fix an $\varepsilon>0$. By Lemma~\ref{le:Prob-Meas-Approx-MultiP}, there exists a kernel $P^\varepsilon_t:\Omega^t\to\fM_1(\Omega_1)$ such that for all $\omega^t \in \Omega^t$
\begin{align*}
\ E^{P^\varepsilon_t(\omega^t)}&[\Psi_{t+1}(\omega^t\otimes_t\,\cdot, H^{t+1}(\omega^t))]-\varepsilon\\[1ex]
&\leq  \ (-\varepsilon^{-1})\vee \inf_{P \in \fP_t(\omega^t)}E^P[\Psi_{t+1}(\omega^t\otimes_t\,\cdot, H^{t+1}(\omega^t))]\\[1ex]
&\leq  \ (-\varepsilon^{-1})\vee \sup_{x \in \R^d}\inf_{P \in \fP_t(\omega^t)}E^P[\Psi_{t+1}(\omega^t\otimes_t\,\cdot, H^t(\omega^{t-1}),x)]\\[1ex]
& = \ (-\varepsilon^{-1})\vee \Psi_t(\omega^t,H^t(\omega^{t-1})).
\end{align*}
Take any $P \in \fP$ and denote its restriction to $\Omega^t$ by $P^t$. Integrating the above inequalities yields
\begin{align*}
E^{P^t}[(-\varepsilon^{-1})\vee \Psi_t(H^t)]
\geq
E^{P^t\otimes P^\varepsilon_t}[\Psi_{t+1}( H^{t+1})]-\varepsilon 
\geq
\inf_{P'\in \fP}E^{P'}[\Psi_{t+1}( H^{t+1})]-\varepsilon.
\end{align*}
Letting $\varepsilon \to 0$, we obtain, by Fatou's Lemma, that
\begin{equation*}
E^P[ \Psi_t(H^t)] \geq \inf_{P'\in \fP}E^{P'}[\Psi_{t+1}( H^{t+1})].
\end{equation*}
This implies the inequality~\eqref{eq:pf-thm-claim2-MultiP}, as $P \in \fP$ was arbitrary. 
\end{proof}
%
%
\appendix
\section*{Appendix}\label{sec:Appendix}
Here we provide a simple fact about horizon functions of compositions. Let us first recall the definition of the domain of a function $f\colon\R^n\rightarrow\R$
$$
\dom f :=\{ x \in \R^n \,|\,f(x)>-\infty\}.
$$
%
%
\begin{lemma}\label{le:comp-horizon-append}
Let $U\colon \R \to \R \cup\{-\infty\}$ be concave, nondecreasing, nonconstant and upper-semicontinuous. Let $V\colon \R^n \to \R$ be concave, upper-semicontinuous and assume that $V(\R^n)\cap \dom U \neq \emptyset$. Then the function 
\begin{equation*}
\Psi:\R^n \to \R\cup\{-\infty\}, \quad  h \mapsto  \Psi(h):=
\begin{cases}
U(V(h)) & \mbox{if } \  h \in \dom V\\
-\infty& \mbox{otherwise}.
\end{cases}
\end{equation*}
is concave, proper and upper-semicontinuous. Moreover,   $\Psi^\infty$ satisfies
\begin{equation*}
\Psi^\infty(h)=\begin{cases}
U^\infty(V^\infty(h)) & \mbox{if } \  h \in \dom V^\infty\\
-\infty& \mbox{otherwise}.
\end{cases}
\end{equation*}
\end{lemma}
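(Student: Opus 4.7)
The plan is to verify the two assertions separately: first that $\Psi$ is concave, proper and upper-semicontinuous, and then the explicit formula for $\Psi^\infty$. Properness is immediate from the hypothesis $V(\R^n)\cap\dom U\neq\emptyset$: pick $h_0 \in \R^n$ with $v_0 := V(h_0) \in \dom U$, so $\Psi(h_0) = U(v_0) > -\infty$. Concavity follows by a routine chain: for $h_1,h_2 \in \R^n$ and $\lambda \in [0,1]$, concavity of $V$ gives $V(\lambda h_1 + (1-\lambda)h_2) \geq \lambda V(h_1) + (1-\lambda)V(h_2)$, and applying the nondecreasing concave $U$ preserves the inequality. Upper semicontinuity is the one place that uses nonconstancy of $U$; for a concave nondecreasing nonconstant $U\colon\R\to\R\cup\{-\infty\}$ one has $\lim_{y\to-\infty}U(y)=-\infty$ (a positive-slope chord exists by nonconstancy, and by concavity its slope propagates leftwards, forcing $U\to-\infty$). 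Combined with usc of $V$ and monotonicity of $U$, the standard bound $U(V(h_k))\leq U\bigl(\sup_{m\geq k}V(h_m)\bigr)$ together with usc of $U$ at $V(h)$ closes the argument, including the degenerate case $V(h)=-\infty$.

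For the horizon-function identity I would fix the reference point $h_0$ above and use
\[
\Psi^\infty(h) \;=\; \lim_{t\to\infty}\frac{U(v(t))-U(v_0)}{t},\qquad V^\infty(h)\;=\;\lim_{t\to\infty}\frac{v(t)-v_0}{t}\;=:\;w,
\]
with $v(t):=V(h_0+th)$; the second difference quotient decreases monotonically to $w\in\{-\infty\}\cup\R$ by concavity of $V$. The key scalar input is that $U(y)/y$ has limits at both infinities: $U(y)/y\to s_+:=U^\infty(1)\in[0,\infty)$ as $y\to+\infty$ and $U(y)/y\to s_-\in(0,\infty]$ as $y\to-\infty$, with $s_->0$ forced by nonconstancy of $U$. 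Positive homogeneity of $U^\infty$ then gives $U^\infty(w)=s_+w$ for $w\geq0$ and $U^\infty(w)=s_-w$ for $w\leq0$ (with the convention $s_-\cdot(-\infty)=-\infty$).

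The formula then falls out by case analysis on $w$. When $w\neq0$ (including $w=-\infty$), the plan is to exploit the factorisation
\[
\frac{U(v(t))}{t}\;=\;\frac{U(v(t))}{v(t)}\cdot\frac{v(t)}{t}.
\]
Since $v(t)\to+\infty$ when $w>0$ and $v(t)\to-\infty$ otherwise (using $v(t)\geq v_0+tw$ from the monotone quotient), the first factor converges to $s_+$ or $s_-$ accordingly, the second to $w$, and the product delivers exactly $U^\infty(w)$; the case $w=-\infty$ yields $-\infty$ courtesy of $s_->0$. The main subtle point is the degenerate case $w=0$, where the above factorisation becomes indeterminate (of shape $0\cdot s_-$ with $s_-$ possibly infinite). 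Here I would exploit that the monotonically decreasing quotient $(v(t)-v_0)/t$ converges to $0$ from above, so $v(t)\geq v_0$ for all $t>0$, and then sandwich via the tangent inequality at $v_0$:
\[
0\;\leq\;\frac{U(v(t))-U(v_0)}{t}\;\leq\;U'(v_0^+)\cdot\frac{v(t)-v_0}{t}\;\longrightarrow\;0\;=\;U^\infty(0),
\]
where $U'(v_0^+)$ is finite since $\dom U$ is right-unbounded (by monotonicity) and contains $v_0$. The treatment of $w=0$ is the one step that requires a genuinely different argument; the rest amounts to bookkeeping of signs and asymptotic slopes of the scalar concave function $U$.
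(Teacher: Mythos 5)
Your proposal reaches the right conclusion but by a genuinely different route from the paper. The paper's proof never computes $U^\infty$ explicitly and never splits on the sign of $V^\infty(h)$: it sets $a_m:=\frac1m(V(x+mh)-V(x))\downarrow a=V^\infty(h)$ and sandwiches the difference quotient of $\Psi$ directly, using monotonicity of $U$ to get $\Psi^\infty(h)\le U^\infty(a_m)$ for every $m$ (then upper semicontinuity of $U^\infty$ as $m\to\infty$) and $a_n\ge a$ to get $\Psi^\infty(h)\ge U^\infty(a)$. That argument is uniform in $h$ and avoids your case analysis entirely. Your route instead identifies $U^\infty(w)$ with the asymptotic slopes $s_\pm$ and factors $U(v(t))/t=\bigl(U(v(t))/v(t)\bigr)\cdot\bigl(v(t)/t\bigr)$; this is more computational but arguably more illuminating about \emph{why} the formula holds, and it is correct in the cases $w\ne 0$ (where the product of limits is not an indeterminate form because $s_->0$ by nonconstancy). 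A side remark: nonconstancy of $U$ is not actually needed for upper semicontinuity of $\Psi$ (the composition of a nondecreasing usc $U$ with a usc $V$ is usc by the chord argument you sketch); where nonconstancy really enters is in forcing $s_->0$, i.e.\ $\Psi^\infty(h)=-\infty$ off $\dom V^\infty$.

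There is, however, one concrete flaw in the $w=0$ case: the assertion that $U'(v_0^+)$ is finite because $\dom U$ contains $[v_0,\infty)$ is false when $v_0$ is the left endpoint of $\dom U$ (take $U(y)=\sqrt{y}$ for $y\ge 0$, $U(y)=-\infty$ for $y<0$, and $v_0=0$: then $U'(0^+)=+\infty$ and your upper bound reads $+\infty\cdot 0$). Such a configuration is not excluded by the hypotheses, since $V(\R^n)\cap\dom U$ may consist only of boundary points of $\dom U$. The fix is short: for a fixed $\varepsilon>0$ use the finite chord slope $L_\varepsilon:=\frac{U(v_0+\varepsilon)-U(v_0)}{\varepsilon}$ in place of $U'(v_0^+)$, splitting according to whether $v(t)\le v_0+\varepsilon$ (then the numerator is bounded by the constant $U(v_0+\varepsilon)-U(v_0)$, so the quotient tends to $0$) or $v(t)>v_0+\varepsilon$ (then concavity gives $U(v(t))-U(v_0)\le L_\varepsilon\,(v(t)-v_0)$, and dividing by $t$ again gives $0$ in the limit). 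Alternatively, one can move the base point $h_0$ so that $V(h_0)$ is not the left endpoint of $\dom U$; the only situation in which this is impossible is $V(\R^n)\cap\dom U=\{v_0\}$, and there $v(t)\equiv v_0$ whenever $w=0$, so the claim is trivial. With that repair your argument goes through.
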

\begin{proof}
The first part of the lemma is obvious. The only thing requiring a proof is the statement about the form of the horizon function $\Psi^\infty$.

First, choose a point $x\in \R^n$ such that $V(x)\in\dom U$; equivalently, such that $\Psi(x)>-\infty$. Now, fix any $h \in \R^n$. The mapping $V$ is concave, hence 
$$
\mbox{the sequence\ }\ 
m\mapsto\frac1m \big(V(x+mh)-V(x)\big)=:a_m
\ \ 
\mbox{is nonincreasing.}
$$
Denote its limit by $a$. Then, $a>-\infty$ if and only if $h\in\dom V^\infty$; indeed $a=V^\infty(h)$.

Let us first estimate the horizon function $\Psi^\infty$ from above. 
If $a_m=-\infty$ for some $m$, this implies by definition that $\Psi^\infty(h)=-\infty$, as $(a_m)$ is nonincreasing. Now assume that $a_m>-\infty$ for each $m$. Then 
\begin{align*}
\Psi^\infty(h)
&= 
\lim_{n\rightarrow\infty}\frac1n\big(\Psi(x+nh)-\Psi(x)\big) \\
&=
\lim_{n\rightarrow\infty}\frac1n \left(U\left(n\,\frac1n \big(V(x+nh)-V(x)\big) + V(x)\right)-\Psi(x)\right)\\
&\leq
\lim_{n\rightarrow\infty}\frac1n \big(U\left(n\,a_m + V(x)\right)-U(V(x))\big)\\
&=
U^\infty(a_m).
\end{align*}
If $h \in \dom V^\infty$, then by letting $m$ tend to infinity, the above estimate and upper-semicontinuity of $U^\infty$ yield $\Psi^\infty(h)\leq U^\infty(V^\infty(h))$. 
If $h \notin \dom V^\infty$, then $a_m$ diverges to $-\infty$, hence also $U(a_m)$ tends to $-\infty$ as $U$ is concave, nondecreasing and nonconstant.
By definition of the horizon function we have
$$
U^\infty(a_m)\leq U(a_m+V(x))-U(V(x)),
$$ 
hence also $U^\infty(a_m)$ tends to $-\infty$. 
This proves the desired first inequality.

To estimate  $\Psi^\infty$ from below, we only need to consider $h \in \dom V^\infty$. Indeed, we know from above that  $\Psi^\infty(h)=-\infty$ whenever $h \notin \dom V^\infty$. So let $h \in \dom V^\infty$. Then
\begin{align*}
\Psi^\infty(h)
=
\lim_{n\rightarrow\infty}\frac1n \big(U(n\,a_n+V(x))-U(V(x))\big)
&\geq
\lim_{n\rightarrow\infty}\frac1n \big(U(n\,a + V(x))-U(V(x))\big)\\
&= U^\infty(V^\infty(h)).
\end{align*}
\end{proof}

%
%

\begin{thebibliography}{10}
	
	\bibitem{AlmgrenChriss.00}
	R.~Almgren and N.~Chriss.
	\newblock Optimal execution of portfolio transactions.
	\newblock {\em J. Risk}, 3:5--39, 2000.
	
	\bibitem{Bartl.16}
	D.~Bartl.
	\newblock Exponential utility maximization under model uncertainty for
	unbounded endowments.
	\newblock {\em Preprint, arXiv:1610.00999}, 2016.
	
	\bibitem{BertsekasShreve.78}
	D.~P. Bertsekas and S.~E. Shreve.
	\newblock {\em Stochastic Optimal Control. The Discrete-Time Case.}
	\newblock Academic Press, New York, 1978.
	
	\bibitem{BiaginiPinar.15}
	S.~Biagini and M.~Pinar.
	\newblock The robust {M}erton problem of an ambiguity averse investor.
	\newblock {\em Mathematics and Financial Economics}, 11(1):1--24, 2017.
	
	\bibitem{Bouchard.06}
	B.~Bouchard.
	\newblock No-arbitrage in discrete-time markets with proportional transaction
	costs and general information structure.
	\newblock {\em Finance Stoch.}, 10(2):276--297, 2006.
	
	\bibitem{BouchardNutz.13}
	B.~Bouchard and M.~Nutz.
	\newblock Arbitrage and duality in nondominated discrete-time models.
	\newblock {\em Ann. Appl. Probab.}, 25(2):823--859, 2015.
	
	\bibitem{CarassusBlanchard.16}
	L.~Carassus and R.~Blanchard.
	\newblock Robust optimal investment in discrete time for unbounded utility
	function.
	\newblock {\em Preprint, arXiv:1609.09205}, 2016.
	
	\bibitem{DelbaenSchachermayer.06}
	F.~Delbaen and W.~Schachermayer.
	\newblock {\em The Mathematics of Arbitrage}.
	\newblock Springer, Berlin, 2006.
	
	\bibitem{DenisKervarec.13}
	L.~Denis and M.~Kervarec.
	\newblock Optimal investment under model uncertainty in nondominated models.
	\newblock {\em SIAM J. Control Optim.}, 51(3):1803--1822, 2013.
	
	\bibitem{DolinskySoner.13}
	Y.~Dolinsky and H.~M. Soner.
	\newblock Robust hedging with proportional transaction costs.
	\newblock {\em Finance Stoch.}, 18(2):327--347, 2014.
	
	\bibitem{Evstigneev.76}
	I.~V. Evstigneev.
	\newblock Measurable selection and dynamic programming.
	\newblock {\em Math. Oper. Res.}, 1(3):267--272, 1976.
	
	\bibitem{FouquePunWong.16}
	J.-P. Fouque, C.~S. Pun, and H.~Y. Wong.
	\newblock Portfolio optimization with ambiguous correlation and stochastic
	volatilities.
	\newblock {\em SIAM J. Control Optim.}, 54(5):2309--2338, 2016.
	
	\bibitem{Kabanov.99}
	Y.~Kabanov.
	\newblock Hedging and liquidation under transaction costs in currency markets.
	\newblock {\em Finance and Stochastics}, 3(2):237--248, 1999.
	
	\bibitem{kabanov2006dalang}
	Y.~Kabanov and C.~Stricker.
	\newblock The {D}alang-{M}orton-{W}illinger theorem under delayed and
	restricted information.
	\newblock In {\em In memoriam {P}aul-{A}ndr\'e {M}eyer: {S}\'eminaire de
		{P}robabilit\'es {XXXIX}}, volume 1874 of {\em Lecture Notes in Math.}, pages
	209--213. Springer, Berlin, 2006.
	
	\bibitem{LinRiedel.14}
	Q.~Lin and F.~Riedel.
	\newblock Optimal consumption and portfolio choice with ambiguity.
	\newblock {\em Preprint arXiv:1401.1639v1}, 2014.
	
	\bibitem{MatoussiPossamaiZhou.12utility}
	A.~Matoussi, D.~Possamai, and C.~Zhou.
	\newblock Robust utility maximization in non-dominated models with {2BSDEs}:
	The uncertain volatility model.
	\newblock {\em Math. Finance}, 25(2):258--287, 2015.
	
	\bibitem{NeufeldNutz.15}
	A.~Neufeld and M.~Nutz.
	\newblock Robust utility maximization with {L\'e}vy processes.
	\newblock {\em Math. Finance}, 28(1):82--105, 2018.
	
	\bibitem{Nutz.13util}
	M.~Nutz.
	\newblock Utility maximization under model uncertainty in discrete time.
	\newblock {\em Math. Finance}, 26(2):252--268, 2016.
	
	\bibitem{Owari.15}
	K.~Owari.
	\newblock A robust version of convex integral functionals.
	\newblock {\em Journal of Convex Analysis}, 22(3):827--852, 2015.
	
	\bibitem{MR2828763}
	T.~Pennanen.
	\newblock Convex duality in stochastic optimization and mathematical finance.
	\newblock {\em Math. Oper. Res.}, 36(2):340--362, 2011.
	
	\bibitem{PennanenPerkkio.12}
	T.~Pennanen and A.-P. Perkki{\"o}.
	\newblock Stochastic programs without duality gaps.
	\newblock {\em Math. Program.}, 136(1):91--110, 2012.
	
	\bibitem{RasonyiStettner.06}
	M.~R{\'a}sonyi and L.~Stettner.
	\newblock On the existence of optimal portfolios for the utility maximization
	problem in discrete time financial market models.
	\newblock In {\em From stochastic calculus to mathematical finance}, pages
	589--608. Springer, Berlin, 2006.
	
	\bibitem{Rockafellar.97}
	R.~T. Rockafellar.
	\newblock {\em Convex analysis}.
	\newblock Princeton Landmarks in Mathematics. Princeton University Press,
	Princeton, NJ, 1997.
	\newblock Reprint of the 1970 original, Princeton Paperbacks.
	
	\bibitem{RockafellarWets}
	R~Tyrrell Rockafellar and Roger J-B Wets.
	\newblock {\em Variational analysis}, volume 317.
	\newblock Springer Science \& Business Media, 2009.
	
	\bibitem{Schachermayer.04}
	W.~Schachermayer.
	\newblock The fundamental theorem of asset pricing under proportional
	transaction costs in finite discrete time.
	\newblock {\em Math. Finance}, 14(1):19--48, 2004.
	
	\bibitem{TevzadzeToronjadzeUzunashvili.13}
	R.~Tevzadze, T.~Toronjadze, and T.~Uzunashvili.
	\newblock Robust utility maximization for a diffusion market model with
	misspecified coefficients.
	\newblock {\em Finance. Stoch.}, 17(3):535--563, 2013.
	
\end{thebibliography}
	
%
%
%

%
%




\end{document}